\title{4 vs 7 sparse undirected unweighted Diameter is SETH-hard at time $n^{4/3}$}
\titlerunning{4 vs 7 sparse undirected unweighted Diameter is SETH-hard at time $n^{4/3}$}
\author{\'{E}douard Bonnet}{Univ Lyon, CNRS, ENS de Lyon, Université Claude Bernard Lyon 1, LIP UMR5668, France}{edouard.bonnet@ens-lyon.fr}{https://orcid.org/0000-0002-1653-5822}{}
\authorrunning{\'E. Bonnet}
\keywords{Diameter, inapproximability, SETH lower bounds, k-Orthogonal Vectors}
\algrenewcommand\algorithmicrequire{\textbf{Precondition:}}
\algrenewcommand\algorithmicensure{\textbf{Postcondition:}}
\newcommand{\fov}{\textsc{4-OV}\xspace}
\newcommand{\kov}{\textsc{$k$-OV}\xspace}
\newcommand{\ltov}{\textsc{2-Orthogonal Vectors}\xspace}
\newcommand{\lfov}{\textsc{4-Orthogonal Vectors}\xspace}
\newcommand{\lkov}{\textsc{$k$-Orthogonal Vectors}\xspace}
\newcommand{\dist}{d}
\newcommand{\diam}{\text{diam}}
\newcommand{\ind}{\text{ind}}
\newcommand{\maj}{\text{maj}}
\theoremstyle{plain}
\newtheorem{conjecture}[theorem]{Conjecture}
\newtheorem{observation}[theorem]{Observation}
\definecolor{g1}{rgb}{0,0,1}
\definecolor{g2}{rgb}{0.2,0,0.8}
\definecolor{g3}{rgb}{0.4,0,0.6}
\definecolor{g4}{rgb}{0.6,0,0.4}
\definecolor{g5}{rgb}{0.8,0,0.2}
\definecolor{g6}{rgb}{1,0,0}
\begin{document}

\maketitle

\begin{abstract}
  We show, assuming the Strong Exponential Time Hypothesis, that for every $\varepsilon > 0$, approximating undirected unweighted \textsc{Diameter} on $n$-vertex $m$-edge graphs within ratio $7/4 - \varepsilon$ requires $m^{4/3 - o(1)}$ time, even when $m = \Tilde{O}(n)$.
  This is the first result that conditionally rules out a near-linear time $5/3$-approximation for undirected \textsc{Diameter}. 
\end{abstract}

\section{Introduction}\label{sec:intro}

The diameter of a graph is the length of a longest shortest path between two of its vertices.
We write \textsc{Diameter} for the algorithmic task of computing the diameter of an input graph.
Throughout the paper, $n$ implicitly denotes the number of vertices of a graph, and $m$, its number of edges.
We will often prefix \textsc{Diameter} with \emph{undirected/directed} to indicate whether or not edges may be oriented\footnote{In directed~\textsc{Diameter}, we are to compute the length of a longest shortest path taken from any vertex to any vertex.}, and \emph{unweighted/weighted} to indicate whether or not non-negative edge weights are allowed. 

A fairly recent and active line of work aims to determine the best runtime for an algorithm approximating \textsc{Diameter} within a given ratio.
First, there is an exact algorithm running in time\footnote{where $\Tilde{O}(\cdot)$ suppresses the polylogarithmic factors} $\Tilde{O}(mn)$, which computes $n$ shortest-path trees from every vertex of the graph.
Secondly, there is a $2$-approximation running in time $\Tilde{O}(m)$, which computes a shortest-path tree from an arbitrary vertex and outputs the largest distance found.
There are an $\Tilde{O}(m^{3/2})$ time $3/2$-approximation for directed weighted \textsc{Diameter}~\cite{Aingworth99,Roditty13,Chechik14}, and for every non-negative integer $k$, an $\Tilde{O}(m n^{\frac{1}{k+1}})$ time $(2-2^{-k})$-approximation\footnote{with an extra additive factor depending on the weights} for \emph{undirected} weighted \textsc{Diameter}~\cite{Cairo16}.
We refer the interested reader to the survey of Rubinstein and Vassilevska Williams \cite{Rubinstein19}.

We will now focus on sparse graphs, for which $m = \Tilde{O}(n)$.
This is because the current paper deals with conditional lower bounds on approximating~\textsc{Diameter}, and all such results even work with that restriction.
Observe that, on sparse graphs, the first result of the previous paragraph is a near-quadratic 1-approximation, while the second result is a near-linear \mbox{2-approximation}.
One can represent these ratio-runtime trade-offs in the two-dimensional plane.
The ultimate goal of fine-grained complexity, in that particular context, is to obtain a complete curve of algorithms linking these two extreme points, matched by tight conditional lower bounds.
We now present one way of deriving conditional lower bounds for polytime problems.

\paragraph*{Lower bounds based on the Strong Exponential Time Hypothesis}

The Strong Exponential Time Hypothesis (SETH, for short) asserts that for every $\varepsilon > 0$, there is an integer $k$ such that \textsc{$k$-SAT} cannot be solved in time $(2-\varepsilon)^n$ on $n$-variable instances~\cite{Impagliazzo01}.
At first glance, this assumption should only be useful to rule out some specific running time for NP-hard problems which, like the satisfiability problem, seems to require superpolynomial time.
Such conditional lower bounds to classical~\cite{Cygan16} or parameterized algorithms~\cite{Cygan15} are overviewed in a survey~\cite{Lokshtanov11} on the consequences of the SETH (as well as the weaker assumption ETH) on solving computationally hard problems.

Interestingly, using the SETH to rule out a given running time for a polynomial-time solvable problem took more time.
In a survey of fine-grained complexity~\cite{WilliamsSurvey}, Vassilevska Williams dates the first reduction (albeit used positively) from \textsc{SAT} to a problem in P back to 2005~\cite{Williams05}.
We will see that this reduction to \ltov, where one wants to find two orthogonal $0,1$-vectors within a given list, is very relevant to the fine-grained complexity of~\textsc{Diameter}.
As it turns out, the first SETH-based lower bound for a polytime graph problem occurred almost a decade later, on the very unweighted undirected \textsc{Diameter}~\cite{Roditty13}.

There might have been a psychological barrier in reducing a ``hard'' problem to an ``easy'' one, in order to derive a conditional lower bound.
However this makes perfect sense.
Let us give an apropos example.
Suppose (as it is actually the case) that one can create in time $O(n)$ a~list of $n$ 0,1-vectors with $n = O(2^{N/2})$, from an $N$-variable \textsc{SAT} formula, such that there is pair of orthogonal vectors in the list if and only if the formula is satisfiable.
Now a truly subquadratic algorithm, that is in time $n^{2-\varepsilon}$ for some $\varepsilon > 0$, for \ltov would enable to solve \textsc{SAT} in time $O(2^{(1-\varepsilon/2)N})=O((2-\delta)^N)$ for some $\delta > 0$, contradicting the SETH.
We thus say that \ltov is \emph{SETH-hard} at time~$n^2$, and more generally a problem $\Pi$ is \emph{SETH-hard} at time $T$ if it requires time $T^{1-o(1)}$ under the SETH.  

\paragraph*{SETH lower bounds for \textsc{Diameter}}

There is a handful of SETH-hardness results on approximating \textsc{Diameter}~\cite{Roditty13,Backurs18,Bonnet20,Wein20,Li20,Li20b}.
Unless the SETH fails, any $3/2 - \varepsilon$-approximation for sparse undirected unweighted~\textsc{Diameter}, with $\varepsilon > 0$, requires time $n^{2 - o(1)}$~\cite{Roditty13} (this is the above-mentioned seminal result to the fine-grained complexity within P), whereas any $5/3 - \varepsilon$-approximation requires time $n^{3/2 - o(1)}$~\cite{Li20} (an early version of~\cite{Li20b}).
Since a $5/3$-approximation of \textsc{Diameter} running in near-linear time was consistent with the then knowledge (up until mid-August 2020, even in weighted directed graphs) Rubinstein and Vassilevska Williams \cite{Rubinstein19} and Li~\cite{Li20} ask for such an algorithm or some lower bounds with a ratio closer to~2.

In the last few months, there were several developments on directed graphs. 
The author showed that, under the SETH, $7/4 - \varepsilon$-approximating sparse directed weighted~\textsc{Diameter} requires time $n^{4/3 - o(1)}$~\cite{Bonnet20}.
Then Wein and Dalirrooyfard~\cite{Wein20}, and independently, Li~\cite{Li20b} (an updated version of~\cite{Li20}) both show that not only this result holds on directed \emph{unweighted} graphs but they generalize it in the following way: Unless the SETH fails, for every $\varepsilon > 0$ and every integer $k \geqslant 4$, $\frac{2k-1}{k}-\varepsilon$-approximating directed unweighted~\textsc{Diameter} requires time $n^{\frac{k}{k-1} - o(1)}$.

Despite these advances, a near-linear time $5/3$-approximation for the \emph{undirected}~\textsc{Diameter} may still have existed. 
In this paper, we rule out this possibility by showing the following (see~\cref{fig:results} for a visual summary of what is now known on approximating undirected~\textsc{Diameter}).

\begin{theorem}\label{thm:main}
  Unless the SETH fails, for any $\varepsilon > 0$, $7/4 - \varepsilon$-approximating \textsc{Diameter} on undirected unweighted $n$-vertex $\Tilde{O}(n)$-edge graphs requires $n^{4/3 - o(1)}$ time. 
\end{theorem}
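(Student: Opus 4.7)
The plan is to reduce from \fov, which is SETH-hard at time $n^4$. Given four sets $A_1, A_2, A_3, A_4$ of $n$ binary vectors of dimension $d = \Theta(\log n)$, I build an undirected unweighted graph $G$ on $N = \Tilde{O}(n^3)$ vertices and $\Tilde{O}(N)$ edges such that $\diam(G) \geqslant 7$ whenever an orthogonal quadruple exists across the four lists, and $\diam(G) \leqslant 4$ otherwise. A $(7/4 - \varepsilon)$-approximation of \diameter would then separate these two cases, so running it in time $N^{4/3-o(1)} = n^{4-o(1)}$ would contradict the SETH.

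For the core gadget, the target vertex count $N = \Tilde{O}(n^3)$ is attained by two ``triple'' vertex sets, e.g.\ $X = A_1 \times A_2 \times A_3$ and $Y = A_2 \times A_3 \times A_4$, together with smaller sets of single-vector vertices (one per $A_i$) and coordinate vertices (one per index in $[d]$). The backbone is an intended length-$7$ shortest path from a vertex of $X$ to a vertex of $Y$, alternating between vector and coordinate vertices and encoding a chained check: a transition through a coordinate vertex $c$ is only possible if the two adjacent vector vertices both have entry~$1$ at coordinate $c$. The existence of an orthogonal quadruple $(a_1, a_2, a_3, a_4)$ then forces some endpoint pair (built from triples containing these vectors) to be at distance exactly $7$, since no common $1$-coordinate is available to shortcut the path.

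To push $\diam(G)$ down to $4$ in NO instances, shortcut gadgets at every layer must let an endpoint pair skip large portions of the backbone whenever the four involved vectors share a common $1$-coordinate. Since the graph is undirected, the shortcuts have to be arranged symmetrically from both ends and engineered so that they cannot shortcut a genuine orthogonal-quadruple path; the standard device is to carry a ``coordinate certificate'' along the path, or to use a parity argument on layer indices, ensuring that only consistent quadruples can exploit the detours.

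The main obstacle is controlling shortest paths in the \emph{undirected} setting. In the directed weighted construction of~\cite{Bonnet20}, orientations trivially forbid backward traversals; here every edge is bidirectional, and the shortcut gadgets must avoid creating unintended length-$\leqslant 6$ paths across unrelated vector combinations in YES instances. Striking the right balance between the $(4, 7)$ diameter gap and the near-linear edge count $\Tilde{O}(N)$ under this extra symmetry constraint is where the bulk of the technical work lies; once the gadget is correct, the SETH implication is a routine consequence of the $n^{4-o(1)}$ lower bound for \fov.
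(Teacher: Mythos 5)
Your proposal gets the outer shell right---reduce from \fov with $\ell = \Theta(\mathrm{polylog}\, n)$, build a graph on $\Tilde{O}(n^3)$ vertices and edges with diameter $\geqslant 7$ in YES instances and $\leqslant 4$ in NO instances, and conclude that a $(7/4-\varepsilon)$-approximation in time $N^{4/3-\delta}$ would refute the SETH---but that shell was already the known target from prior work (\cite{Bonnet20,Wein20,Li20b}); the entire content of the theorem is the construction itself, and your proposal does not supply one. You describe ``two triple vertex sets,'' single-vector vertices, coordinate vertices, and then ``shortcut gadgets'' governed by a ``coordinate certificate'' or ``a parity argument on layer indices,'' and you explicitly concede that making these gadgets avoid unintended length-$\leqslant 6$ paths in the undirected setting ``is where the bulk of the technical work lies.'' That concession is precisely the gap: without a concrete edge set there is no way to verify either direction of the $4$ vs $7$ claim, and the difficulty you defer is exactly the one that defeats the naive adaptation of the directed constructions.

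Concretely, two things are missing. First, in NO instances you must bound the distance of \emph{every} pair, not just the endpoint pairs encoding quadruples; the delicate pairs are those between an endpoint triple and a middle-layer vertex (in the paper's notation, $T \times P$), which cannot all be made distance $4$ by naive shortcuts without simultaneously dragging every $T \times T$ pair down to distance $\leqslant 6$ and killing the YES-instance gap. The paper's resolution is structural: a set $I$ of pure-index vertices turned into a clique and placed at distance exactly $3$ from $T$ (weight-$3$ edges in the weighted warm-up, then a $T - T' - T''$ path plus an auxiliary clique $I'$ to handle the new $T \times T''$ pairs in the unweighted version), together with index-switching cliques inside $C$; none of this, nor any equivalent mechanism, appears in your sketch. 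Second, in YES instances the lower bound of $7$ requires an exhaustive case analysis over all shapes of length-$\leqslant 6$ paths between the two triples encoding the orthogonal quadruple (paths through the clique $I$, paths confined to $T \cup C \cup P$, with or without index-switching edges, and in the unweighted graph also paths through $I'$ or using $E(T',C)$), which your one-sentence assertion that ``no common $1$-coordinate is available to shortcut the path'' does not replace. As it stands, the proposal is a plan to find a proof rather than a proof.
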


In particular we resolve \cite[Open Question 2.2.]{Rubinstein19}, on the existence of a near-linear time $5/3$-approximation for \emph{undirected}~\textsc{Diameter}, by the negative.

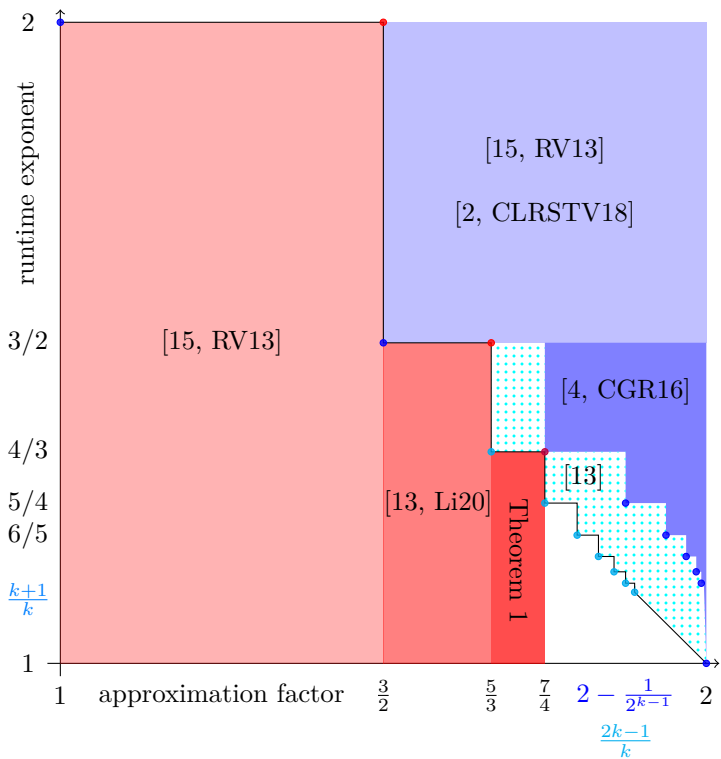
\begin{figure}[h!]
  \centering
  \begin{tikzpicture}
    \def\s{8.5}
    \def\xb{1}
    \def\xe{2}
    \def\yb{1}
    \def\ye{2}
    \def\h{0.02}
    \draw[->] (\xb * \s, \yb * \s - \h * \s) --  (\xb * \s, \ye * \s + \h * \s) ;
    \draw[->] (\xb * \s - \h * \s, \yb * \s) --  (\xe * \s + \h * \s, \yb * \s) ;
    \foreach \i/\j in {1/1,1.1/{\textcolor{blue!50!cyan}{$\frac{k+1}{k}$}},1.2/{6/5},1.25/{5/4},1.33/{4/3},1.5/{3/2},2/2}{
      \node at (\xb * \s - 0.05 * \s, \i * \s) {\j} ;
    }
    \foreach \i/\j in {1/1,1.5/{$\frac{3}{2}$},1.667/{$\frac{5}{3}$},1.75/{$\frac{7}{4}$},1.875/{\textcolor{blue}{$2-\frac{1}{2^{k-1}}$}},2/2}{
      \node at (\i * \s, \yb * \s - 0.05 * \s) {\j} ;
    }
    \node at (1.875 * \s, \yb * \s - 0.12 * \s) {\textcolor{cyan}{$\frac{2k-1}{k}$}} ;

    \node at (1.25 * \s, \yb * \s - 0.05 * \s) {approximation factor} ;
    \node at (\xb * \s - 0.05 * \s, 1.75 * \s) {\rotatebox{90}{runtime exponent}} ;
    \fill[red,opacity=0.3] (\xb * \s, \yb * \s) -- (1.5 * \s, \yb * \s) -- (1.5 * \s, \ye * \s) -- (\xb * \s, \ye * \s) -- cycle ;
    \node at (1.25 * \s, 1.5 * \s) {\cite[RV13]{Roditty13}} ;

    \fill[red,opacity=0.5] (1.5 * \s, \yb * \s) -- (1.667 * \s, \yb * \s) -- (1.667 * \s, 1.5 * \s) -- (1.5 * \s, 1.5 * \s) -- cycle ;
    \node at (1.585 * \s, 1.25 * \s) {\cite[Li20]{Li20b}} ;

    \fill[red,opacity=0.7] (1.667 * \s, \yb * \s) -- (1.75 * \s, \yb * \s) -- (1.75 * \s, 1.33 * \s) -- (1.667 * \s, 1.33 * \s) -- cycle ;
    \node at (1.71 * \s, 1.16 * \s) {\rotatebox{-90}{\cref{thm:main}}} ;

    \fill[blue,opacity=0.25] (1.5 * \s, 1.5 * \s) -- (1.5 * \s, \ye * \s) -- (\xe * \s, \ye * \s) -- (\xe * \s, 1.5 * \s) -- cycle ;
    \node at (1.75 * \s, 1.8 * \s) {\cite[RV13]{Roditty13}} ;
    \node at (1.75 * \s, 1.7 * \s) {\cite[CLRSTV18]{Backurs18}} ;
    
    \fill[blue,opacity=0.5] (1.75 * \s, 1.5 * \s) -- (1.75 * \s, 1.33 * \s) -- (1.875 * \s, 1.33 * \s) -- (1.875 * \s, 1.25 * \s) -- (1.9375 * \s, 1.25 * \s) -- (1.9375 * \s, 1.2 * \s) -- (1.96875 * \s, 1.2 * \s) -- (1.96875 * \s, 1.1667 * \s) -- (1.984375 * \s, 1.1667 * \s) -- (1.984375 * \s, 1.1428 * \s) -- (1.9921875 * \s, 1.1428 * \s) -- (1.9921875 * \s, 1.125 * \s) -- (1.99609375 * \s, 1.125 * \s) -- (\xe * \s, \yb * \s) -- (\xe * \s, 1.5 * \s) -- cycle ;
    \node at (1.875 * \s, 1.425 * \s) {\cite[CGR16]{Cairo16}} ;

     \begin{scope}
       \pgfsetfillpattern{dots}{cyan}
       \fill (1.667 * \s, 1.33 * \s) -- (1.75 * \s, 1.33 * \s) -- (1.75 * \s, 1.5 * \s) -- (1.667 * \s, 1.5 * \s) -- cycle ;
     \fill (1.889 * \s, 1.111 * \s) -- (1.889 * \s, 1.125 * \s) -- (1.875 * \s, 1.125 * \s) -- (1.875 * \s, 1.1428 * \s) -- (1.8571 * \s, 1.1428 * \s) -- (1.8571 * \s, 1.1667 * \s) -- (1.833 * \s, 1.1667 * \s) -- (1.833 * \s, 1.2 * \s) -- (1.8 * \s, 1.2 * \s) -- (1.8 * \s, 1.25 * \s) -- (1.75 * \s, 1.25 * \s) -- (1.75 * \s, 1.33 * \s) -- (1.875 * \s, 1.33 * \s) -- (1.875 * \s, 1.25 * \s) -- (1.9375 * \s, 1.25 * \s) -- (1.9375 * \s, 1.2 * \s) -- (1.96875 * \s, 1.2 * \s) -- (1.96875 * \s, 1.1667 * \s) -- (1.984375 * \s, 1.1667 * \s) -- (1.984375 * \s, 1.1428 * \s) -- (1.9921875 * \s, 1.1428 * \s) -- (1.9921875 * \s, 1.125 * \s) -- (1.99609375 * \s, 1.125 * \s) -- (\xe * \s, \yb * \s) -- cycle;
     \end{scope}
    \draw (\xe * \s, \yb * \s) -- (1.889 * \s, 1.111 * \s) -- (1.889 * \s, 1.125 * \s) -- (1.875 * \s, 1.125 * \s) -- (1.875 * \s, 1.1428 * \s) -- (1.8571 * \s, 1.1428 * \s) -- (1.8571 * \s, 1.1667 * \s) -- (1.833 * \s, 1.1667 * \s) -- (1.833 * \s, 1.2 * \s) -- (1.8 * \s, 1.2 * \s) -- (1.8 * \s, 1.25 * \s) -- (1.75 * \s, 1.25 * \s) -- (1.75 * \s, 1.33 * \s) -- (1.667 * \s, 1.33 * \s) -- (1.667 * \s, 1.5 * \s) -- (1.5 * \s, 1.5 * \s) -- (1.5 * \s, \ye * \s) -- (\xb * \s, \ye * \s) ;

    \node at (1.81 * \s, 1.29 * \s) {\cite{Li20b}} ;
    \foreach \i/\j/\c in {1/2/blue,1.5/2/red,1.5/1.5/blue,1.667/1.5/red,1.75/1.33/purple,1.875/1.25/blue,1.9375/1.2/blue,1.96875/1.1667/blue,1.984275/1.1428/blue,1.9921875/1.125/blue,2/1/blue, 1.667/1.33/cyan, 1.75/1.25/cyan, 1.8/1.2/cyan, 1.833/1.1667/cyan,1.8571/1.1428/cyan,1.875/1.125/cyan,1.889/1.111/cyan}{
      \draw[\c] (\i * \s, \j * \s) circle [radius=1.2pt] ;
      \fill[\c,opacity=0.8] (\i * \s, \j * \s) circle [radius=1.2pt] ;
    }
  \end{tikzpicture}
  \caption{Approximability of sparse undirected unweighted~\textsc{Diameter}.
  Blue areas are feasible, as witnessed by algorithms at bottom-left corners (blue dots).
  The red regions are SETH-hard, as witnessed by reductions at top-right corners (red dots).
  Dotted cyan areas are not SETH-hard, unless the NSETH fails.
  The current landscape for the sparse undirected \emph{weighted}~\textsc{Diameter} is the same, except the middle red region is due to~Backurs et al.~\cite{Backurs18} instead of~\cite{Li20b}.
  The axis-parallel black curve represents the tractability frontier as foreseen by~\cref{conj:main}.}
  \label{fig:results}
\end{figure}

In light of the recent results (see in particular the paragraph on barriers to SETH-hardness), it is reasonable to conjecture that the four variants of sparse~\textsc{Diameter} (undirected/directed unweighted/weighted) are equally approximable.
More precisely, we venture the following optimistic prediction.

\begin{conjecture}\label{conj:main}
  Sparse (un)directed (un)weighted~\textsc{Diameter} is $2 - \frac{1}{k}$-approximable in time $\tilde{O}(n^{\frac{k+1}{k}})$ for every $k \in \mathbb N^+ \cup \{\infty\}$.
  Unless the SETH fails, approximating sparse (un)directed (un)weighted~\textsc{Diameter} within ratio better than $2 - \frac{1}{k+1}$ requires time $n^{\frac{k+1}{k}-o(1)}$ for every $k \in \mathbb N^+$.
\end{conjecture}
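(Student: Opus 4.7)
The plan is to give a gap-preserving reduction from \lfov, which under SETH requires $n^{4-o(1)}$ time on four lists of $n$ vectors of dimension $d = (\log n)^{O(1)}$. From a \lfov instance with lists $A, B, C, D$ I will construct, in polynomial time, an undirected unweighted graph $G$ on $N = \tilde O(n^3)$ vertices and edges such that whenever no orthogonal 4-tuple exists, $\diam(G) \leq 4$, and whenever some orthogonal 4-tuple exists, $\diam(G) \geq 7$. Any $(7/4 - \varepsilon)$-approximation for \diameter then distinguishes these two cases, so a runtime of $N^{4/3 - \varepsilon'}$ would give an $\tilde O(n^{4 - 3\varepsilon'})$-time algorithm for \lfov, contradicting SETH.

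The backbone of the construction will adapt the author's earlier gadget for directed weighted \diameter in \cite{Bonnet20} and the unweighted directed constructions of Wein--Dalirrooyfard \cite{Wein20} and Li \cite{Li20b}. I will organise the vertex set into a handful of layers: two ``side'' layers $V_A$ and $V_D$ of size $n$ indexing the outer vector lists; a ``heavy'' middle layer carrying the bulk of the $\tilde O(n^3)$ vertex budget, built from triples of indices augmented with coordinate labels; and a small coordinate layer of size $d$. Edges are placed so that the distance from $x \in V_A$ to $y \in V_D$ traverses, in exactly four hops, a coordinate at which the whole 4-tuple has a common $1$-entry, and so that no such 4-path exists when the 4-tuple is orthogonal. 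All vertex pairs not of the form $V_A \times V_D$ are forced into pairwise distance $\leq 4$ by additional short-cut edges, which must be installed carefully so as not to collapse the critical $V_A$-to-$V_D$ distance.

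The technical content splits into a completeness claim and a soundness claim. Completeness ($\diam(G) \leq 4$) is a combinatorial bookkeeping task: for every pair of candidate witnesses whose 4-tuple is not orthogonal, exhibit an explicit path of length $4$ through a common coordinate where all four vectors evaluate to $1$, and check that every other pair of vertices remains at distance $\leq 4$ via the shortcut structure. Soundness ($\diam(G) \geq 7$) is the delicate part: one must rule out every undirected path of length $\leq 6$ between a pair of vertices certifying an orthogonal 4-tuple. I would enforce this by giving the graph a layered bipartite-like structure with controlled distance parities, and by subdividing the coordinate gadgets so that any shortening of a $V_A$-to-$V_D$ route is forced to traverse a ``witness'' coordinate that by hypothesis does not exist.

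The main obstacle is exactly the move from directed to undirected: in a directed graph the natural orientation channels paths in a single direction and prevents shortcuts, whereas in the undirected setting every edge is bidirectional and can potentially be used in a backtracking shortcut that mixes information from several layers. To block this, I expect to introduce padding in the side and middle layers, partition the coordinate set into independently handled groups, and force certain ``transit'' vertices to lie far from one another by design. Calibrating this padding so that the graph size stays at $\tilde O(n^3)$, the YES-diameter is exactly $4$, and the NO-diameter reaches $7$, will be the bulk of the proof and is the main place where a brand-new gadget beyond \cite{Bonnet20,Wein20,Li20b} is needed.
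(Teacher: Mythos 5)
There is a genuine mismatch between what you set out to prove and what the statement actually asserts. The statement is \cref{conj:main}, an open prediction that the paper itself does not prove: it claims, for \emph{every} $k \in \mathbb N^+$ (and $k=\infty$), both an algorithmic upper bound --- a $2-\frac{1}{k}$-approximation in time $\tilde{O}(n^{\frac{k+1}{k}})$ for all four variants, including directed weighted \textsc{Diameter} --- and a matching SETH lower bound ruling out ratio better than $2-\frac{1}{k+1}$ in time $n^{\frac{k+1}{k}-o(1)}$. Your proposal addresses neither of these in general: it sketches only a single point of the lower-bound half, namely the $k=3$ case (ratio $7/4-\varepsilon$ requires $n^{4/3-o(1)}$), which is the content of \cref{thm:main}, not of the conjecture. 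The algorithmic half is far beyond a reduction from \lfov: as the paper notes, no known algorithm running in time $n^{3/2-\varepsilon}$ achieves ratio better than $2$ on sparse directed graphs, and the best undirected trade-off (Cairo--Grossi--Rizzi) gives $2-2^{-k}$ in $\tilde{O}(mn^{\frac{1}{k+1}})$, which does not match the conjectured curve. Likewise, the lower-bound half for undirected graphs is only known for $k=1,2,3$; your construction, even if completed, would not extend it to general $k$, and Li's NSETH barrier results constrain but do not supply such reductions.

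Even read charitably as an attempt at \cref{thm:main} alone, the sketch is essentially the paper's own strategy (reduce from \lfov, build a sparse graph on $\tilde{O}(N^3)$ vertices with diameter $4$ versus $7$), but it stops exactly where the difficulty begins: you acknowledge that ``a brand-new gadget beyond \cite{Bonnet20,Wein20,Li20b} is needed'' without constructing it. The paper's actual resolution of the directed-to-undirected obstacle is concrete --- collapse the layers to $T$, $C$, $P$, add a cliquified index set $I$ placed at distance $3$ from $T$ (first via weight-$3$ edges, then removed by splitting $T$ into $T, T', T''$ and adding a second small clique $I'$), together with index-switching edges inside $C$ --- and the soundness argument is a careful case analysis over path types through these sets. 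None of that is present in your proposal, so the gap is twofold: the statement you were asked about is a conjecture whose bulk (algorithms, general $k$) you do not touch, and the one special case you do target is left at the level of intent rather than proof.
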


\cref{conj:main} is naturally equivalent to obtaining the algorithms for the directed weighted \textsc{Diameter} and the SETH-hardness for the undirected unweighted~\textsc{Diameter}.
Settling the conjecture would give a complete landscape of the approximability of~\textsc{Diameter}, where if one represents the results in the two-dimensional space of approximation factor vs runtime exponent, the feasible and infeasible regions are separated by a rectilinear curve with infinitely many corners (the black curve drawn in~\cref{fig:results}).
In that respect, our contribution is to give the third lower bound on the curve (i.e., North-East corner) after Roditty and Vassilevska Williams gave the first~\cite{Roditty13}, and Li, the second~\cite{Li20}.
Hopefully our new ideas (together with the recent constructions in the directed case of Wein and Dalirrooyfard~\cite{Wein20}, and Li~\cite{Li20b}) will also help in generalizing the lower bound predicted by~\cref{conj:main} to every positive integer~$k$.

\paragraph*{Barriers to SETH lower bounds}

\cref{conj:main} is partly prompted by intriguing results due to Li~\cite{Li20b}.
To state them, we need to recall the definition of a strengthening of SETH introduced by Carmosino et al.~\cite{Carmosino16}.
It is called NSETH for Nondeterministic SETH.
NSETH asserts that for every $\varepsilon > 0$, there is an integer $k$ such that the \textsc{$k$-Taut} problem cannot be solved in \emph{non-deterministic} time $(2-\varepsilon)^n$, where \textsc{$k$-Taut} asks, given a $k$-DNF formula whether every truth assignment satisfies it (in other words, if it is a tautology). 
Li shows, for all four variants of~\textsc{Diameter} \emph{but} the directed weighted one, that no point positioned strictly above the rectilinear black curve of~\cref{fig:results} can be shown SETH-hard, under the NSETH (and, if randomized reductions are permitted, under a stronger assumption, called NUNSETH for Non-Uniform NSETH).

\cref{conj:main} is very optimistic since it predicts that every such point will be explained by an algorithm.
There are many alternatives to that event.
For instance NSETH could be false\footnote{If we are totally honest, even the weaker SETH does not gather such a wide consensus, and is false if quantum computation is allowed.}, or the intractability region could extend further North via a non SETH-based reduction, or via a deterministic SETH-based reduction in the directed weighted case. 
Besides it would require significant progress in approximating the sparse directed~\textsc{Diameter}, when currently no algorithm running in time $n^{\frac{3}{2}-\varepsilon}$ achieves approximation factor better than 2.

The second half of~\cref{conj:main} shown for every $k \geqslant 4$ on directed graphs~\cite{Wein20,Li20b}, and for $k=1,2,3$ on undirected graphs, is much easier to believe in.

\paragraph*{Techniques}

Like every mentioned \textsc{Diameter} lower bound (for more details, see the paragraphs on \kov and \textsc{Diameter} in the surveys~\cite{WilliamsSurvey,Rubinstein19}), we reduce from \lkov, where one seeks, in a given set of $N$ $0,1$-vectors of dimension $\ell$, $k$ vectors such that at every index, at least one of these $k$ vectors has a 0 entry. 
Under the SETH, \lkov requires time $N^{k-o(1)}$~\cite{Williams05}, even when $\ell$ is polylogarithmic in $N$.

Here we will reduce from \lfov.
We thus wish to build a graph on $\Tilde{O}(N^3)$ vertices and edges with diameter 7 if there is an orthogonal quadruple (i.e., a solution to the the \lfov instance), and diameter 4 otherwise.
Following a reduction to $ST$-\textsc{Diameter}\footnote{where one seeks the length of a longest shortest path from a vertex of $S$ to a vertex of $T$} by Backurs et al.~\cite{Backurs18} (arguably also following~\cite{Roditty13}) most of the reductions (as in \cite{Bonnet20,Wein20,Li20b}) feature layers $L_0, L_1, \ldots, L_{k-1}, L_k$, with only (forward) edges between two consecutive $L_i$. 
The vertices within the same layer share the same number of ``vector attributes'' and ``index attributes''.
The interplay between vector and index attributes in defining the vertices and edges is made so that if there are no $k$~orthogonal vectors, then there are paths of ``optimal'' length $k$ between every pair in $L_0 \times L_k$, whereas if there is set~$X$ of $k$~orthogonal vectors, a pair in $L_0 \times L_k$ jointly encoding~$X$ is suddenly very far apart (usually and ideally at distance $2k-1$).

Then the challenge is to make sure that, on NO-instances, the other pairs (not in $L_0 \times L_k$) are at distance at most $k$, without destroying the previous property.
The core of our reduction is similar to our previous construction for directed weighted~\textsc{Diameter}~\cite{Bonnet20}.
However we simplify and streamline it in the following way.
As in the first construction of Li~\cite{Li20}, we collapse some layers into one.
We will have $L_0 = L_4 (= T)$ and $L_1 = L_3 (= C)$, while $L_2$ is called $P$.
This makes the case analyses simpler (fewer kinds of pairs to consider).

At this point, we face the same issue as in~\cite{Bonnet20}: There are pairs in $T \times P$ that are too far apart.
On directed graphs, this can be fixed by adding parallel layers and appropriate ``back'' edges \cite{Bonnet20,Wein20} or simply ``back'' edges \cite{Li20b}. 
This is no longer an option.
Instead we add a set $I$ of vertices with only index attributes.
These vertices link the right pairs of $T \times P$ with path of length~4 (we are back to the first variation on the theme~\cite{Roditty13}).
To emphasize that the situation is somewhat delicate, we observe that not all the pairs of $T \times P$ can be at distance~4, since otherwise every pair in $T \times T$ is at distance at most~6. 
We set $I$ at distance 3 of $T$ (by initially putting edges of weight~3).
This permits to cliquify $I$ without creating $TT$-paths of length at most~6.
In turn, this puts every pair involving $I$ at distance at most~4, as well as pairs of $(C \cup P) \times P$.
Note that as long as $\dist(T,X)+\dist(T,Y) \geqslant 3$ (or $k-1$), one can have \emph{all} the pairs of $X \times Y$ at distance~4 (or $k$), without creating undesired $TT$-paths of length at most~6 (or $2k-2$).

We then remove the weight-3 edges between $T$ and $I$.
This involves some vertex splits transforming $T$ into $T, T', T''$, and a simpler echo of the idea of having the clique $I$, with a clique $I'$ connecting appropriately the pairs in $T \times T''$.  

\paragraph*{Organization}

In~\cref{sec:prelim}, we recall graph-theoretic notations, and give the relevant background on the \textsc{Orthogonal Vectors} problem.
In~\cref{sec:weighted}, we present a simpler reduction with edge weights.
It thus achieves the statement of~\cref{thm:main} for sparse undirected weighted~\textsc{Diameter}. 
In~\cref{sec:unweighted}, we tune this reduction to get rid of the edge weights, and establish~\cref{thm:main}. 

\section{Preliminaries}\label{sec:prelim}

We use standard graph-theoretic notations.
If $G$ is a graph, $V(G)$ denotes its vertex set, and $E(G)$, its edge set.
We denote the edge set between $X \subseteq V(G)$ and $Y \subseteq V(G)$ by $E(X,Y)$.
If $S \subseteq V(G)$, $G[S]$ denotes the subgraph of $G$ induced by $S$. 
\emph{Weighted graphs} have positive edge weights.
(Throughout the paper, we will only need edges of weight~1 and 3.) 
We exclusively deal with undirected graphs (for which the distance function is symmetric). 
For $u, v \in V(G)$, $\dist_G(u,v)$ denotes the distance between $u$ and $v$ in $G$, that is, the number of edges in a shortest path between $u$ and $v$. 
For every positive integer $r$ and every vertex $u \in V(G)$, $N^r_G[u]$ denotes the set of vertices $v$ such that $d_G(u,v) \leqslant r$. 
In unweighted graphs, the closed neighborhood of $u$, denoted $N_G[u]$, coincides with $N^1_G[u]$.
However in a weighted graph $N^1_G[u]$ would for instance not contain the neighbors of $u$ via an edge of weight greater than~1.
This subtlety will arise only once, and we will remind the reader in due time.
For every positive integer $r$ and every vertex $S \subseteq V(G)$, $N^r_G[S]$ denotes the set of vertices $v \in V(G)$ such that $d_G(u,v) \leqslant r$ for some $u \in S$.
We observe that, in unweighted graphs, $N^r_G[S]$ coincides with $N_G[N_G[\cdots N_G[N_G[S]] \cdots ]]$ where $N_G[\cdot]$ is applied $r$~times and $N_G[S]$ is the closed neighborhood of $S$.
We drop the subscript in the above notations, if the graph $G$ is clear from the context.

We denote by $\diam(G)$ the diameter of $G$, that is, $\max_{u,v \in V(G)} \dist(u,v)$.
The~\textsc{Diameter} problem asks, given a graph $G$, for the value of $\diam(G)$.
We call $uv$-path, a path going from vertex $u$ to vertex $v$, and $ST$-path (with possibly $S=T$), any path going from some vertex $u \in S$ to some vertex $v \in T$. 

If $\ell$ is a positive integer, $[\ell]$ denotes the set $\{1,2,\ldots,\ell\}$.
If $v$ is a vector and $i$ is a positive integer, then $v[i]$ denotes the $i$-th coordinate of $v$.
We use $\maj(a_1,\ldots,a_h)$ to denote the value with the largest number of occurrences in the tuple $(a_1,\ldots,a_h)$.

For every fixed positive integer~$k$, the \lkov (\kov for short) problem is as follows.
It asks, given a set~$S$ of 0,1-vectors in $\{0,1\}^\ell$, if there are~$k$ vectors $v_1, \ldots, v_k \in S$ such that for every $i \in [\ell]$, $\Pi_{h \in [k]} v_h[i] = 0$, or equivalently, $v_1[i] = v_2[i] = \cdots = v_k[i] = 1$ does not hold.
Williams~\cite{Williams05} showed that, assuming the SETH, \kov requires $N^{k-o(1)}$ time with $N := |S|$.
Furthermore, using the Sparsification Lemma~\cite{Sparsification}, this lower bound holds even when, say, $\ell = \lceil \log^2 N \rceil$.
Here we will leverage this lower bound for $k=4$.
This is, in the context of the SETH-hardness of approximating \textsc{Diameter}, a usual opening step: For example, Roditty and Vassilevska Williams~\cite{Roditty13} uses this lower bound for $k = 2$, Li~\cite{Li20}, for $k=3$ and general $k \geqslant 3$, the author~\cite{Bonnet20}, for $k=4$, Wein and Dalirrooyfard~\cite{Wein20}, for general $k \geqslant 5$ and $k=4$.

\section{A simpler reduction with edge weights}\label{sec:weighted}

From any set $S$ of $N$ vectors in $\{0,1\}^\ell$, we build an undirected weighted graph $G = \rho(S)$ (with edge weights 1 and 3, only) with $O(N^3+N^2\ell^3+\ell^5)$ vertices and $O(N^3\ell^5+N^2\ell^6+\ell^{10})$ edges such that if $S$ admits an orthogonal quadruple then the diameter of $G$ is (at least)~7, whereas if $S$ has no orthogonal quadruple then the diameter of $G$ is (at most) 4.
We recall that \fov requires $N^{4-o(1)}$ time, unless the SETH fails, even when $\ell = \lceil \log^2 N \rceil$~\cite{Williams05}.
In that case, the graph $G$ has $O(N^3)$ vertices and $\tilde{O}(N^3)$ edges.
Hence any algorithm approximating sparse undirected weighted \textsc{Diameter} within ratio better than $7/4$ in time $n^{4/3 - \delta}$, with $\delta > 0$, would refute the SETH.

\subsection{Construction}\label{sec:construction}

We first describe the vertex set of $G$, then its edge set, and finally check that the number of vertices and edges are as announced.

\paragraph*{Vertex set}

Every vertex of $G$ is the concatenation of a possibly empty tuple of vectors of $S$, called \emph{vector tuple}, followed by a possibly empty tuple of possibly equal indices of $[\ell]$, called~\emph{index tuple}.
Each coordinate of the vector tuple is called a \emph{vector field}, while each coordinate of the index tuple is called an \emph{index field}.
The set $V(G)$ is partitioned into four sets: $T$ (for \textbf{t}riples), $C$ (for \textbf{c}ouples), $P$ (for \textbf{p}airs), and $I$ (for \textbf{i}ndices).
The names behind $T, C, P$ reflect the number and the nature (ordered or unordered) of the vector fields.
Each of these sets comprise vertices with up to three vector fields and five index fields.
They are defined in the following way.
\begin{itemize}
\item $T$: for every $\{a,b,c\} \in {S \choose 3}$, we add vertex $(a,b,c)$ to $T$.
  Thus vertices of $T$ have three vector fields and no index field.
\item $C$: for every $\{a,b\} \in {S \choose 2}$ and $i,j,k \in [\ell]$ such that $a[i] = a[j] = a[k] = 1$ and $\maj(b[i],b[j],b[k])=1$, we add vertex $(a,b,i,j,k)$ to $C$.
  Thus vertices of $C$ have two vector fields and three index fields.
\item $P$: for every $a,b \in S$ and $i,j,k \in [\ell]$ such that $a[i] = a[j] = a[k] = 1$ and $b[i] = b[j] = b[k] = 1$, we add vertex $(\{a,b\},i,j,k)$ to $P$.
  We will still see $a$ and $b$ as filling the \emph{two} vector fields of the vertex, without a \emph{first} vector field and a \emph{second} vector field.
  Contrary to vertices of $C$, $(\{a,b\},i,j,k)$ and $(\{b,a\},i,j,k)$ are two names for the same vertex (whereas $(a,b,i,j,k)$ and $(b,a,i,j,k)$ are two distinct vertices, whose existence implies slightly different properties).
  Thus vertices of $P$ also have two vector fields and three index fields.
  Note also that $\{a,b\}$ is a multiset, since $a$ may be equal to $b$.
\item $I$: for every $p_1,p_2,i,j,k \in [\ell]$, we add vertex $(p_1,p_2,i,j,k)$ to $I$.
  The chosen labels for the five index fields anticipate that, to build the edge set, it is convenient to imagine a~separation after the first two index fields of the tuple. 
  The vertices of $I$ have no vector field and five index fields.
\end{itemize}

\paragraph*{Edge set}

We will put some edges between $T$ and $C$, $C$ and $P$, $P$ and $I$, and $I$ and $T$.
In addition, we put \emph{index-switching edges} within $I$ and within $C$.
An index-switching edge is between two vertices of the same set ($I$ or $C$) with the same vector tuple (which is always the case in~$I$) and distinct index tuples.
The only edges with a weight different than~1 are the edges between $I$ and $T$, which all have weight~3.
Thus, unless specified otherwise, an edge has weight~1.

The total list of edges is as follows.
\begin{itemize}
\item We add all the index-switching edges within $I$ and $C$.
  Thus $G[I]$ is a clique and $G[C]$ is a disjoint union of at most ${|S| \choose 2}$ cliques (while $G[T]$ and $G[P]$ remain independent sets).
  More explicitly, we have an edge between every pair of distinct vertices $(p_1,p_2,i,j,k) \in I$ and $(p'_1,p'_2,i',j',k') \in I$, and for every $a \neq b \in S$ between every pair of distinct vertices $(a,b,i,j,k) \in C$ and $(a,b,i',j',k') \in C$.  
\item $E(T,C)$: We add an edge between every $(a,b,c) \in T$ and $(a,b,i,j,k) \in C$ provided that there is an $h \in \{i,j,k\}$ such that $b[h] = c[h] = 1$.
\item $E(C,P)$: We add an edge between every $(a,b,i,j,k) \in C$ and $(\{c,d\},i,j,k) \in P$ whenever $a \in \{c,d\}$.
\item $E(T,I)$: We add an edge of weight 3 between every $(a,b,c) \in T$ and $(p_1,p_2,i,j,k) \in I$ whenever $a[p_1] = b[p_1] = c[p_1] = a[p_2] = b[p_2] = c[p_2] = 1$. 
\item $E(I,P)$: We add an edge between every $(p_1,p_2,i,j,k) \in I$ and $(\{a,b\},i,j,k) \in P$ whenever $a[p_1] = b[p_2] = 1$ or $a[p_2] = b[p_1] = 1$.
\end{itemize}

This ends the construction.
See~\cref{fig:weighted} for an illustration.

\begin{figure}[h!]
  \centering
  \begin{tikzpicture}
    \def\h{5}
    \def\v{3}
    \foreach \i/\j/\l/\ll/\x/\y in {0/0/{(a,b,c)}/abc/0/-0.5, 0/1/{(a,b,i,j,k)}/abijk/-1.2/0, 0.3/1/{(a,b,i',j',k')}/abijk2/0/-0.5, 0/2/{(\{d,e\},i,j,k)}/deijk/-1.3/0, -1.2/1/{(p_1,p_2,i,j,k)}/ind/0/0.5,-1.5/1/{(p'_1,p'_2,i',j',k')}/indp/0/-0.5}{
      \node[draw, circle] (\ll) at (\h * \i,\v * \j) {} ;
      \node (t\ll) at (\h * \i + \x,\v * \j + \y) {$\l$} ;
    }

    \foreach \i/\j/\c in {abc/abijk/black, abijk/abijk2/black,abijk/deijk/black,abc/ind/blue,ind/deijk/black,ind/indp/black}{
      \draw[thick, color=\c] (\i) -- (\j) ;
    }
    \node (d1) at (- 0.5 * \h, 0.1 * \v) {} ;
    \node (d2) at (0.5 * \h, -0.1 * \v) {} ;
    \node (d1p) at (- 0.5 * \h, 2.1 * \v) {} ;
    \node (d2p) at (0.5 * \h, 1.9 * \v) {} ;

    \small{}
    \node (ab1) at (- 0.32 * \h, 1.15 * \v) {$\mathbf{a[i]=a[j]=a[k]=1}$} ;
    \node (ab2) at (- 0.35 * \h, 0.85 * \v) {\textbf{maj}$\mathbf{(b[i],b[j],b[k])=1}$} ;
    \node (de1) at (- 0.18 * \h, 2.15 * \v) {$\mathbf{d[i]=d[j]=d[k]=e[i]=e[j]=e[k]=1}$} ;
    \node at (0.03 * \h, 0.55 * \v) {$\mathbf{\exists h \in \{i,j,k\},}$} ;
    \node at (0.03 * \h, 0.35 * \v) {$\mathbf{c[h]=b[h]=1}$} ;
    \node at (0.04 * \h, 1.55 * \v) {$\mathbf{a \in \{d,e\}}$} ;
    \node at (-0.8 * \h, 0.55 * \v) {$\mathbf{a[p_1]=b[p_1]=c[p_1]=}$} ;
    \node at (-0.8 * \h, 0.35 * \v) {$\mathbf{a[p_2]=b[p_2]=c[p_2]=1}$} ;
    \node at (-0.85 * \h, 0.78 * \v) {\textcolor{blue}{3}} ;
    \node at (-0.82 * \h, 1.65 * \v) {$\mathbf{d[p_1]=e[p_2]=1}$} ;
    \node at (-0.82 * \h, 1.55 * \v) {\textbf{or}} ;
    \node at (-0.82 * \h, 1.45 * \v) {$\mathbf{d[p_2]=e[p_1]=1}$} ;
    \normalsize{}

    \node[draw,rectangle,rounded corners,thick,fit=(ab1) (ab2) (tabijk2)] (C) {} ;
    \node[draw,rectangle,rounded corners,thick,fit=(abc) (tabc) (d1) (d2)] (T) {} ;
    \node[draw,rectangle,rounded corners,thick,fit=(de1)  (deijk) (tdeijk) (d1p) (d2p)] (P) {} ;
    \node[draw,rectangle,rounded corners,thick,fit=(tind) (tindp)] (I) {} ;

    \foreach \i/\j/\k in {0.4/0/T, 0.4/1.1/C, 0.4/2/P, -1.7/1.1/I}{
      \node (t\k) at (\i * \h, \j * \v) {$\k$} ;
    }
    \foreach \i/\j/\k in {0.48/0.135/{(3,0)}, 0.48/1.235/{(2,3)}, 0.48/2.235/{(2,3)}, -1.72/1.245/{(0,5)}}{
      \node[red] at (\i * \h, \j * \v) {\tiny{$\k$}} ;
    }
  \end{tikzpicture}
  \caption{The weighted construction $G$.
    In bold, the conditions for the existence of a vertex or of an edge.
    The edge in blue, and more generally every edge of $E(T,I)$, has weight~3, while all other edges have weight~1.
    The pairs in red recall, for vertices of the corresponding set, the length of their vector tuple followed by the length of their index tuple.
  }
  \label{fig:weighted}
\end{figure}
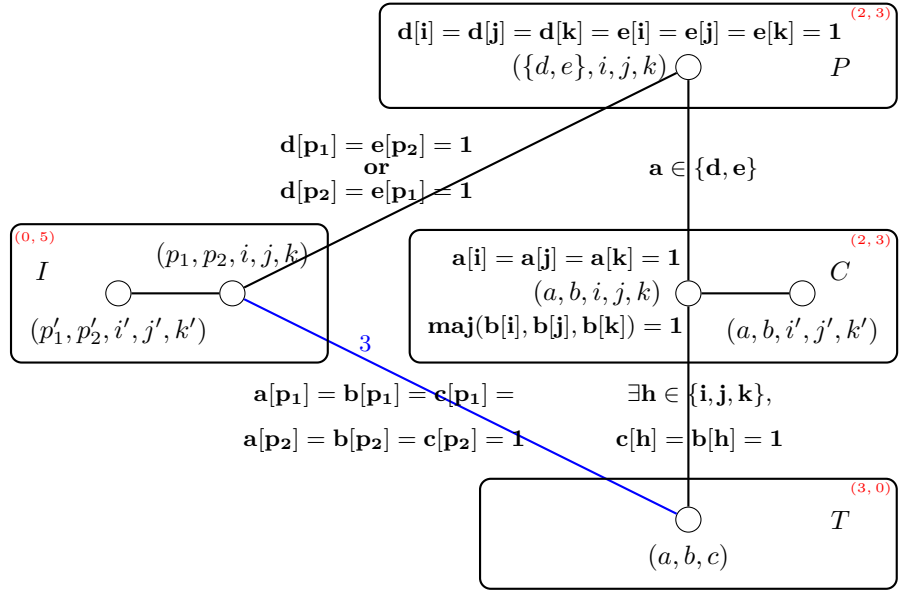

\paragraph*{Vertex and edge count}

There are $O(N^3)$ vertices in $T$, $O(N^2 \ell^3)$, in $C \cup P$, and $\ell^5$, in $I$, hence $O(N^3+N^2\ell^3+\ell^5)=O(N^3)$ in total.
There are $O(N^3 \ell^3)$ edges in $E(T,C) \cup E(C,P)$, $O(N^2 \ell^6)$, in $E(C)$, $O(N^3 \ell^5)$, in $E(T,I)$, $O(N^2 \ell^5)$, in $E(I,P)$, and $O(\ell^{10})$ in $E(I)$, hence $O(N^3 \ell^5+N^2 \ell^6+\ell^{10})=\tilde{O}(N^3)$ edges in total.
Furthermore $G$ can be built in time $\Tilde{O}(N^3)$.

\subsection{The absence of orthogonal quadruple implies diameter at most 4}

Assuming that there is no orthogonal quadruple, we show that every pair of vertices of $G$ is at distance at most~4.
For that we repeatedly use that, for every $a,b,c,d \in S$, $\ind(a,b,c,d) := \min\{i \in [\ell] $ $|$ $a[i]=b[i]=c[i]=d[i]=1\}$ is a well-defined index in $[\ell]$.
We only take the minimum index to have a deterministic notation, but there is nothing particular with it, and any index of the non-empty $\{i \in [\ell] $ $|$ $a[i]=b[i]=c[i]=d[i]=1\}$ would work all the same.

We first observe that every vertex is at distance at most~3 from $I$.
\begin{lemma}\label{lem:distI}
  $N^1[I] \supseteq I \cup P$, $N^2[I] \supseteq I \cup P \cup C$, and $N^3[I] = V(G)$.
\end{lemma}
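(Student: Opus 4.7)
The plan is to prove the three inclusions in turn, essentially descending from $I$ through the layers $P$, $C$, and $T$, and invoking the standing no-orthogonal-quadruple hypothesis only at the last step.

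The first two inclusions are almost immediate from the construction. Since the index-switching edges make $G[I]$ a clique, $I \subseteq N^1[I]$. For a $P$-vertex $(\{a,b\},i,j,k)$, its existence condition forces $a[i]=b[i]=1$, so setting $p_1=p_2=i$ yields an $I$-vertex $(i,i,i,j,k)$ adjacent to it via $E(I,P)$; this gives $P \subseteq N^1[I]$.

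For $C \subseteq N^2[I]$, I will chain through $P$. Given $(a,b,i,j,k) \in C$, the condition $a[i]=a[j]=a[k]=1$ makes $(\{a,a\},i,j,k)$ a valid $P$-vertex (using that $\{a,a\}$ is treated as a multiset). Since $a \in \{a,a\}$, this $P$-vertex is joined to $(a,b,i,j,k)$ by an $E(C,P)$-edge, and by the previous step it is at distance $1$ from $I$, giving a path of length $2$.

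The main step, and the only one requiring the standing hypothesis of the subsection, is $T \subseteq N^3[I]$. The only edges between $T$ and $I$ are the weight-$3$ edges of $E(T,I)$, so a single such edge already exhausts the distance budget of $3$. The existence of an edge from $(a,b,c) \in T$ to $(p_1,p_2,i,j,k) \in I$ demands $a[p_h]=b[p_h]=c[p_h]=1$ for $h \in \{1,2\}$; taking $p_1=p_2=p$ reduces this to finding some $p \in [\ell]$ with $a[p]=b[p]=c[p]=1$, i.e.\ the triple $\{a,b,c\}$ must be non-orthogonal. The crux is that a priori the hypothesis only rules out orthogonal \emph{quadruples}, but any orthogonal triple of $S$ extends to an orthogonal quadruple (either by repeating a vector in the product definition of \lfov, or, equivalently when $|S|\geqslant 4$, by adjoining any other vector of $S$). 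The hypothesis therefore rules out orthogonal triples as well, so every $(a,b,c) \in T$ admits such an index $p$ and hence a weight-$3$ edge to the $I$-vertex $(p,p,1,1,1)$, completing the proof.
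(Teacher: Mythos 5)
Your proof is correct, and the first two inclusions are argued exactly as in the paper (adjacency of $(\{a,b\},i,j,k)$ to $(i,i,i,j,k)\in I$, then of $(a,b,i,j,k)$ to $(\{a,a\},i,j,k)\in P$). Where you diverge is the last step: the paper puts $T$ within distance~3 of $I$ by a three-edge unit-weight path $T-C-P-I$, namely $(a,b,c)-(a,b,i,i,i)-(\{a,a\},i,i,i)-(i,i,i,i,i)$, whereas you use a single weight-3 edge of $E(T,I)$ to $(p,p,1,1,1)$. Both are valid: since $N^3[\cdot]$ is defined via weighted distance, a weight-3 edge indeed witnesses membership in $N^3[I]$, and both arguments rest on the same fact that the no-orthogonal-quadruple hypothesis also excludes orthogonal triples (the paper says only ``for otherwise $a,b,c$ is an orthogonal triple''; you spell out the extension of a triple to a quadruple by repetition or by adjoining a fourth vector, which is a welcome clarification, consistent with how the hardness direction later treats non-distinct solutions by pre-checking orthogonal triples). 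The only practical difference is that the paper's route stays on weight-1 edges and re-proves in passing that $T\subseteq N^1[C]$, but as the lemma is only ever invoked as stated, your shortcut loses nothing.
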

\begin{proof}
  The first and second inclusions are actually equalities but we will not need those facts.
  $N^1[I] \supseteq I \cup P$ since every $(\{a,b\},i,j,k) \in P$ is adjacent (with an edge of weight 1) to $(i,i,i,j,k) \in I$.
  Then, $N^2[I] \supseteq N^1[I \cup P] \supseteq I \cup P \cup C$ since every $(a,b,i,j,k) \in C$ is adjacent to $(\{a,a\},i,j,k) \in P$.
  Finally, $N^3[I] \supseteq N^1[I \cup P \cup C] = V(G)$ since every $(a,b,c) \in T$ is adjacent to $(a,b,i,i,i) \in C$ for some $i \in [\ell]$, for otherwise $a,b,c$ is an orthogonal triple.
\end{proof}

We now exhibit paths of length at most~4 between every pair of vertices of $G$.
For the case disjunction, initially imagine the $K_4$ with loops on vertices $T,C,P,I$, where edges correspond to kinds of pairs that are left to check.
The following paragraphs remove all its edges in the order: all edges incident to $I$, all remaining edges incident to $P$ but $TP$, all remaining edges incident to $C$, the loop on $T$, and finally the edge $TP$.

\paragraph*{Between $u \in I$ and $v \in V(G)$}

As $G[I]$ is a clique and, by~\cref{lem:distI}, $N^3[I]=V(G)$, every vertex $u \in I$ is at distance at most 4 from every vertex $v \in V(G)$.

\paragraph*{Between $u \in P$ and $v \in P \cup C$}

For every $u \in P$, $N^2[u] \supset I$ and so $N^4[u] \supset P \cup C$, by~\cref{lem:distI}.
In particular there is a path of length at most 4 between $u$ and any vertex $v \in P \cup C$.

\paragraph*{Between $u \in C$ and $v \in T \cup C$}

Let $(a,b)$ be the two vector fields of $u \in C$, $(c,d)$ be the first two vector fields of $v \in T \cup C$, and $e$ be the third vector field of $v$ if $v \in T$.
Let $i = \ind(a,b,c,d)$, $j = \ind(a,c,d,e)$ if $v \in T$, and $j = i$ if $v \in C$.
We observe that $(a,b,i,i,j), (\{a,c\},i,i,j), (c,d,i,i,j)$ are (existing) vertices of $C$, $P$, and $C$, respectively, and that $u - (a,b,i,i,j) - (\{a,c\},i,i,j) - (c,d,i,i,j)$ is a path of length 3 in $G$.
The existence of these vertices is implied by $a[i] = b[i] = c[i] = d[i] = 1$, $a[j] = c[j] = d[j] = 1$.
The first edge of the path is an index-switching edge within $C$.
The existence of the other edges is implied by $a \in \{a,c\}$, $c \in \{a,c\}$, and the fact that the index tuple $(i,i,j)$ does not change.

Finally if $v \in C$, then the index-switching edge $(c,d,i,i,j) - v$ completes the $uv$-path of length 4.
If instead $v \in T$, then the edge $(c,d,i,i,j) - (c,d,e) = v$ completes the $uv$-path of length 4.
This edge exists since $d[j] = e[j] = 1$.

\paragraph*{Between $u \in T$ and $v \in T$}

Let $u = (a,b,c), v = (d,e,f) \in T$, $i = \ind(a,b,c,d)$, $j = \ind(a,b,d,e)$ and $k = \ind(a,d,e,f)$.
Then $u = (a,b,c) - (a,b,i,j,k) - (\{a,d\},i,j,k) - (d,e,i,j,k) - (d,e,f) = v$ is a path of length 4 in $G$.
These vertices exist since $a$ and $d$ have value 1 on indices $i,j,k$, $b$, on indices $i,j$, and $e$, on indices $j,k$.
The first edge exists since $b[i] = c[i] = 1$, the next two edges exist for similar reasons as invoked in the previous paragraph, and the fourth edge exists since $e[k] = f[k] = 1$. 

\paragraph*{Between $u \in T$ and $v \in P$}

Let $u = (a,b,c) \in T$ and $v = (\{d,e\},i,j,k) \in P$.
We set $p_1 = \ind(a,b,c,d)$, $p_2 = \ind(a,b,c,e)$, and exhibit a $uv$-path of length 4 via $I$.
Indeed $u = (a,b,c) - (p_1,p_2,i,j,k) - (\{d,e\},i,j,k) = v$ is a path of length 4 in $G$ (recall that the first edge of the path has weight~3).
Edge $(a,b,c) - (p_1,p_2,i,j,k) \in E(T,I)$ exists since $a[p_1] = b[p_1] = c[p_1] = a[p_2] = b[p_2] = c[p_2] = 1$.
Edge $(p_1,p_2,i,j,k) - (\{d,e\},i,j,k) \in E(I,P)$ exists since $d[p_1] = e[p_2] = 1$ and the three last indices $(i,j,k)$ remain unchanged. 

\subsection{The presence of orthogonal quadruple implies diameter at least 7}

Let $a,b,c,d \in S$ be an orthogonal quadruple, that is, such that there is \emph{no} index $i \in [\ell]$ satisfying $a[i] = b[i] = c[i] = d[i] = 1$.
We may further assume that $a,b,c,d$ are all distinct since checking for an orthogonal triple can be done in time $\tilde{O}(N^3)$.
We will now show that there is no path $\mathcal P$ of length at most 6 between $u = (a,b,c) \in T$ and $v = (d,c,b) \in T$.

Since the distance between every pair of vertices in $T \times I$ is at least~3, a $TT$-path of length at most~6 cannot contain an edge of the clique $G[I]$, nor more generally intersects $I$ at least twice.
We thus distinguish two cases: (case A) $\mathcal P$ visits $I$ exactly once, and (case B) $\mathcal P$ remains within $T \cup C \cup P$. 
Before proving that no $uv$-path $\mathcal P$ of length at most~6 visits $I$, thereby ruling out case A, we state a couple of useful observations.

\begin{observation}\label{obs:P-to-T}
  There is at most one path of length~2 between $(\{d,e\},i,j,k) \in P$ and $(a,b,c) \in T$, namely $(\{d,e\},i,j,k) - (a,b,i,j,k) - (a,b,c)$, which in particular implies that $a \in \{d,e\}$.
\end{observation}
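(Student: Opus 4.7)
The plan is to analyze the structure of any length-$2$ path from a vertex of $P$ to a vertex of $T$, using that ``length'' here is the total edge weight and that the only edge weights in $G$ are $1$ and $3$. First I would note that a path of total length~$2$ must consist of exactly two weight-$1$ edges. In particular the intermediate vertex cannot lie in $I$, because each edge of $E(T,I)$ already has weight~$3$, so any path $P \to I \to T$ would have length at least~$4$.

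Next I would inspect which weight-$1$ edges are incident to $T$: from the construction, these are exactly the edges in $E(T,C)$. Hence the intermediate vertex of any $PT$-path of length~$2$ must belong to $C$. The rule for $E(T,C)$ forces the first two vector fields of that middle vertex to match the first two vector fields of the $T$-endpoint; so if the $T$-endpoint is $(a,b,c)$, the middle vertex has the form $(a,b,i',j',k') \in C$ for some indices $i',j',k'$.

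Finally, I would apply the rule for $E(C,P)$: the intermediate $C$-vertex and its $P$-neighbor must share the same index tuple, forcing $(i',j',k') = (i,j,k)$, where $(i,j,k)$ is the index tuple of the $P$-endpoint $(\{d,e\},i,j,k)$. The same rule requires the first vector field of the $C$-vertex, namely $a$, to satisfy $a \in \{d,e\}$. Thus the only candidate for a length-$2$ $PT$-path is $(\{d,e\},i,j,k) - (a,b,i,j,k) - (a,b,c)$, and its mere existence entails $a \in \{d,e\}$. The only real point to verify is that no weight-$1$ edges touch $T$ other than those of $E(T,C)$, which is immediate from the edge list, so I expect no genuine obstacle here; the observation is essentially a structural unpacking of the adjacency rules.
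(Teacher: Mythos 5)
Your argument is correct and is essentially the structural unpacking the paper implicitly relies on (the observation is stated without an explicit proof). You correctly note that length means total weight, rule out $I$ as the middle vertex because $E(T,I)$ edges have weight~$3$, identify $C$ as the only option, and then use the $E(T,C)$ and $E(C,P)$ adjacency rules to pin down the unique middle vertex $(a,b,i,j,k)$ and deduce $a \in \{d,e\}$.
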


More basically, the only neighbors of $(a,b,c) \in T$ (at distance 1, so not in $I$) are of the form $(a,b,i,j,k) \in C$.
We can generalize this observation to paths contained in $T \cup C$.

\begin{observation}\label{obs:within-tc}
  For every path within $G[T \cup C]$, all the vertices of the path have the same first two vector fields.    
\end{observation}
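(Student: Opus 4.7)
The plan is a short induction on the length of the path, relying on a direct inspection of the three types of edges that can occur within $G[T \cup C]$.

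First I would enumerate the edges of $G[T \cup C]$. Since $G[T]$ is an independent set by construction, there are no edges within $T$. The edges within $C$ are exactly the index-switching edges, which by definition connect two vertices sharing the same vector tuple; for $C$ this means $(a,b,i,j,k)$ and $(a,b,i',j',k')$ have identical (ordered) first two vector fields $(a,b)$. Finally, the edges in $E(T,C)$ connect a vertex $(a,b,c) \in T$ to a vertex whose ordered vector tuple is precisely $(a,b)$, so again the first two vector fields agree.

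The key observation is therefore: every edge of $G[T \cup C]$ has its two endpoints carrying the same first two vector fields (where for a vertex of $T$ the ``first two vector fields'' means the first two entries of its vector tuple, and for a vertex of $C$ it means its two vector fields). A straightforward induction on the number of edges of the path then yields that any two vertices on a common path in $G[T \cup C]$ share the same first two vector fields.

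The only mild subtlety worth flagging is that vertices of $C$ are defined as ordered pairs $(a,b)$ (in contrast with vertices of $P$, where $\{a,b\}$ is unordered), as stressed in the description of $C$; this is what makes ``first two vector fields'' well defined and preserved across all edges considered. No case analysis beyond the three edge types above is needed, and no measurement of distances is required, so I do not anticipate any real obstacle in formalizing this step.
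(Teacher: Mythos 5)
Your proof is correct and matches the (implicit) reasoning the paper relies on: the observation is stated without a written proof precisely because the edge-by-edge check you carry out — no $T$-$T$ edges, index-switching edges in $C$ fix the ordered pair $(a,b)$, and $E(T,C)$ edges only join $(a,b,c)$ to $(a,b,i,j,k)$ — immediately gives the claim by induction along the path. Your remark that the orderedness of $C$'s vector tuple (in contrast to $P$'s) is what makes ``first two vector fields'' well defined and invariant is exactly the right subtlety to flag.
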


\paragraph*{Case A: $\mathcal P$ visiting $I$}

As $\mathcal P$ cannot visit $I$ twice, if it visits $I$ then it has length exactly~6 and is one of the following kinds: 
(case 1) $T - I - T$, (case 2) $T - C - P - I - P - C - T$, or (case 3) $T - I - P - C - T$ (recall that the edges in $E(I,T)$ have weight~3).
An important feature of such paths is that no index-switching edge can be used, thus the three last index fields (when they exist) have to remain the same.

\textbf{Case 1.} A path $(a,b,c) - (p_1,p_2,i,j,k) - (d,c,b)$ would in particular imply that $a[p_1] = b[p_1] = c[p_1] = d[p_1] = 1$, contradicting the orthogonality of $a,b,c,d$.

\textbf{Case 2.} By~\cref{obs:P-to-T} applied to both ends of the path, $\mathcal P$ is of the form $(a,b,c) - (a,b,i,j,k) - (\{a,e\},i,j,k) - (p_1,p_2,i,j,k) - (\{d,f\},i,j,k) - (d,c,i,j,k) - (d,c,b)$ with some $e, f \in S$.
The existence of the vertices $(a,b,i,j,k), (d,c,i,j,k) \in C$ implies that $a[i] = a[j] = a[k] = d[i] = d[j] = d[k] = 1$, and that $b$ and $c$ have value 1 on at least two indices (with multiplicity) of multiset $\{i,j,k\}$.
In particular, there is an $h \in \{i,j,k\}$ such that $a[h] = b[h] = c[h] = d[h] = 1$, a~contradiction to $a,b,c,d$ being orthogonal.

\textbf{Case 3.}
By~\cref{obs:P-to-T} applied to the second half of the path, $\mathcal P$ has then the form $(a,b,c) - (p_1,p_2,i,j,k) - (\{d,e\},i,j,k) - (d,c,i,j,k) - (d,c,b)$.
The first three vertices yield a contradiction.
Indeed, the existence of edge $(p_1,p_2,i,j,k) - (\{d,e\},i,j,k)$ implies that $d[p_z] = 1$ for some $z \in \{1,2\}$, while the existence of $(a,b,c) - (p_1,p_2,i,j,k)$ implies that $a[p_z] = b[p_z] = c[p_z] = 1$.

\paragraph*{Case B: paths $\mathcal P$ within $T \cup C \cup P$}

We now consider paths $\mathcal P$ in $G[T \cup C \cup P]$.
Since $a \neq d$, $\mathcal P$ has to visit $P$, since otherwise the first vector field cannot change, by~\cref{obs:within-tc}. 
We then observe that no shortest $uv$-path visits $T$ a~third time (one more time than the two endpoints $u$ and $v$).
A $TT$-path visiting $T$ a third time would contain a segment $C - T - C$ that can be shortcut into $C - C$.
Indeed, $(a,b,i,j,k) - (a,b,c) - (a,b,i',j',k')$ has a chord $(a,b,i,j,k) - (a,b,i',j',k')$ which is an index-switching edge of $C$.

We further distinguish two cases:
(case 1) $\mathcal P$ does not contain any index-switching edge, or (case 2) $\mathcal P$ contains at least one index-switching edge.

\textbf{Case 1.}
In that case, $\mathcal P$ is of the form $T - C - P - C - T$ or $T - C - P - C - P - C - T$.
Either way, we consider the unique neighbors of $u$ and $v$ in $\mathcal P$.
These neighbors have to be $(a,b,i,j,k) \in C$ and $(d,c,i,j,k) \in C$ for some $i,j,k \in [\ell]$.
Indeed no index-switching edge nor return to $T$ is allowed here.
Thus we conclude as in case~A.2.

\textbf{Case 2.}
We now assume that $\mathcal P$ contains at least one index-switching edge (of $C$).
In that case, as $\mathcal P$ has length at most~6, it can visit $P$ only once.
Hence $\mathcal P$ is of the kind $T - C - C - P - C - C - T$, where one of the two edges $C-C$ is optional.
We consider the last vertex $u' \in C$ before visiting $P$, and the first vertex $v' \in C$ after visiting $P$.
There is, by design, no index-switching edge between $u'$ and $v'$ on path $\mathcal P$.
Thus by~\cref{obs:within-tc}, there are $i,j,k \in [\ell]$ such that $u'=(a,b,i,j,k)$ and $v'=(d,c,i,j,k)$.
We then conclude as in case~A.2.

\section{Removing the weights}\label{sec:unweighted}

So far we showed the announced lower bound for sparse undirected \emph{weighted} \textsc{Diameter}.
We show how to tune the previous construction to get the same lower bound for sparse undirected unweighted \textsc{Diameter}.
The weighted graph $G$ had only non-trivial edge weights in $E(T,I)$.
We now describe how to replace these weighted edges, to get an unweighted graph $G' = \rho'(S)$.

\subsection{Unweighted construction}

We start with a short summary of the changes.
We will replace $T$ by three copies $T, T', T''$ with an induced perfect matching between $T$ and $T'$, and between $T'$ and $T''$.
We link $T''$ to~$I$ as we linked $T$ to~$I$, and $T$ and $C$, and $T'$ and $C$, as we linked $T$ and $C$.
We finally add a~set $I'$ of vertices with empty vector tuple (like $I$) that we link to $T''$ and $I$ only.  

\paragraph*{Addition to the vertex set}

We add three sets to $V(G)$ to get $V(G')$: two identical copies of $T$, denoted by $T'$ and $T''$, and a set $I'$ isomorphic to $[\ell]$.
More precisely, for every $i \in [\ell]$, we add vertex $(i)$ to $I'$.
Thus $I'$ has no vector field and a unique index field.
We use a subscript to distinguish the \emph{homologous} vertices in $T, T', T''$.
Vertices $(a,b,c)_T \in T, (a,b,c)_{T'} \in T', (a,b,c)_{T''} \in {T''}$ are the three vertices of $G'$ corresponding to the same vertex $(a,b,c)$ of $G$. 

\paragraph*{Edition of the edge set}

We first remove the edges of $G$ with weight~3 (between $T$ and $I$).
For every $\{a,b,c\} \in {S \choose 3}$, we add the edges $(a,b,c)_T - (a,b,c)_{T'}$ and $(a,b,c)_{T'} - (a,b,c)_{T''}$.
We also add edges between $T'$ and $C$, the same way we have defined edges between $T$ and $C$. 
That is, $(a,b,c)_{T'} - (a,b,i,j,k)$ is an edge if and only if $(a,b,c)_T - (a,b,i,j,k)$ is an edge.
Let us recall that the existence of this edge (and of its endpoint in $C$) implies that $a, b, c$ have value 1 on indices $\{i,j,k\}$, three times, at least twice, and at least once, respectively, and that there is an $h \in \{i,j,k\}$ such that $a[h] = b[h] = c[h]$.

We add edges (of weight 1) between $T''$ and $I$, the same way we defined the weight-3 edges of $G$ between $T$ and $I$.
Thus there is an edge $(a,b,c)_{T''} - (p_1,p_2,i,j,k)$ whenever $a[p_1] = b[p_1] = c[p_1] = a[p_2] = b[p_2] = c[p_2] = 1$.
We further add an edge between $(i) \in I'$ and $(a,b,c)_{T''}$ whenever $a[i] = b[i] = 1$.
Finally we add all the index-switching edges in $I'$, and we make $I$ and $I'$ fully adjacent, that is, we turn $G'[I' \cup I]$ into a clique.

This finishes the edition to the unweighted construction.
See~\cref{fig:unweighted} for an illustration.

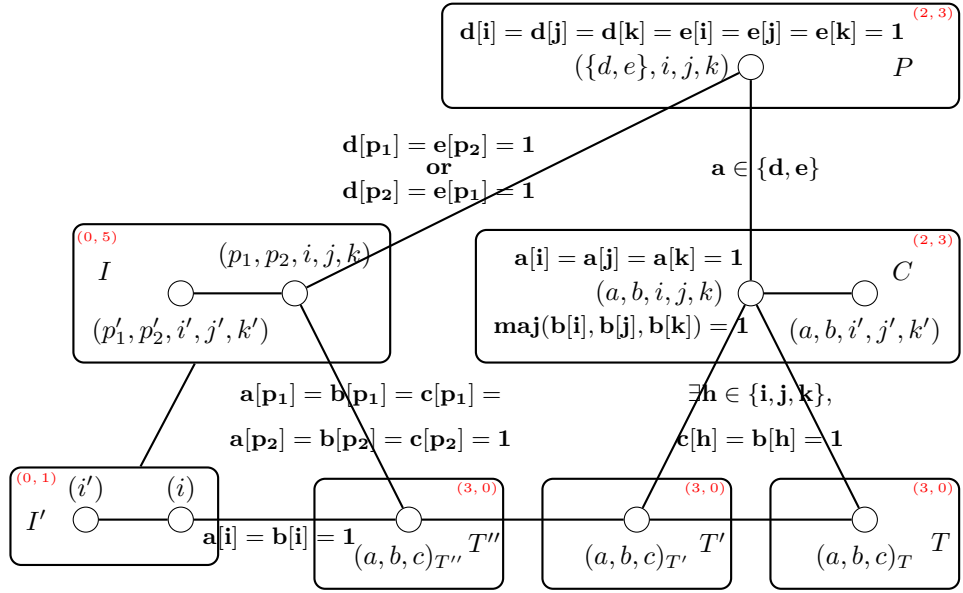
\begin{figure}[h!]
  \centering
  \begin{tikzpicture}
    \def\h{5}
    \def\v{3}
    \foreach \i/\j/\l/\ll/\x/\y in {0.3/0/{(a,b,c)_T}/abc/0/-0.5, -0.3/0/{(a,b,c)_{T'}}/abcp/0/-0.5, -0.9/0/{(a,b,c)_{T''}}/abcpp/0/-0.5, -1.5/0/{(i)}/indi/0/0.4, -1.75/0/{(i')}/indip/0/0.4, 0/1/{(a,b,i,j,k)}/abijk/-1.2/0, 0.3/1/{(a,b,i',j',k')}/abijk2/0/-0.5, 0/2/{(\{d,e\},i,j,k)}/deijk/-1.3/0, -1.2/1/{(p_1,p_2,i,j,k)}/ind/0/0.5,-1.5/1/{(p'_1,p'_2,i',j',k')}/indp/0/-0.5}{
      \node[draw, circle] (\ll) at (\h * \i,\v * \j) {} ;
      \node (t\ll) at (\h * \i + \x,\v * \j + \y) {$\l$} ;
    }

    \foreach \i/\j/\c in {abc/abijk/black, abcp/abijk/black, abc/abcp/black, abcp/abcpp/black, abcpp/ind/black, abcpp/indi/black, indi/indip/black, abijk/abijk2/black,abijk/deijk/black,ind/deijk/black,ind/indp/black}{
      \draw[thick, color=\c] (\i) -- (\j) ;
    }
    \node (d1) at (0.1 * \h, 0.1 * \v) {} ;
    \node (d2) at (0.5 * \h, -0.1 * \v) {} ;
    \node (e1) at (-0.5 * \h, 0.1 * \v) {} ;
    \node (e2) at (-0.1 * \h, -0.1 * \v) {} ;
    \node (f1) at (-1.1 * \h, 0.1 * \v) {} ;
    \node (f2) at (-0.7 * \h, -0.1 * \v) {} ;
    \node (g1) at (-1.45 * \h, 0.15 * \v) {} ;
    \node (g2) at (-1.9 * \h, -0.12 * \v) {} ;
    \node (d1p) at (- 0.5 * \h, 2.1 * \v) {} ;
    \node (d2p) at (0.5 * \h, 1.9 * \v) {} ;

    \small{}
    \node (ab1) at (- 0.32 * \h, 1.15 * \v) {$\mathbf{a[i]=a[j]=a[k]=1}$} ;
    \node (ab2) at (- 0.34 * \h, 0.85 * \v) {\textbf{maj}$\mathbf{(b[i],b[j],b[k])=1}$} ;
    \node (de1) at (- 0.18 * \h, 2.15 * \v) {$\mathbf{d[i]=d[j]=d[k]=e[i]=e[j]=e[k]=1}$} ;
    \node at (0.025 * \h, 0.55 * \v) {$\mathbf{\exists h \in \{i,j,k\},}$} ;
    \node at (0.025 * \h, 0.35 * \v) {$\mathbf{c[h]=b[h]=1}$} ;
    \node at (0.04 * \h, 1.55 * \v) {$\mathbf{a \in \{d,e\}}$} ;
    \node at (-1 * \h, 0.55 * \v) {$\mathbf{a[p_1]=b[p_1]=c[p_1]=}$} ;
    \node at (-1 * \h, 0.35 * \v) {$\mathbf{a[p_2]=b[p_2]=c[p_2]=1}$} ;
    \node at (-1.24 * \h, -0.08 * \v) {$\mathbf{a[i]=b[i]=1}$} ;
    \node at (-0.82 * \h, 1.65 * \v) {$\mathbf{d[p_1]=e[p_2]=1}$} ;
    \node at (-0.82 * \h, 1.55 * \v) {\textbf{or}} ;
    \node at (-0.82 * \h, 1.45 * \v) {$\mathbf{d[p_2]=e[p_1]=1}$} ;
    \normalsize{}

    \node[draw,rectangle,rounded corners,thick,fit=(ab1) (ab2) (tabijk2)] (C) {} ;
    \node[draw,rectangle,rounded corners,thick,fit=(abc) (tabc) (d1) (d2)] (T) {} ;
    \node[draw,rectangle,rounded corners,thick,fit=(abcp) (tabcp) (e1) (e2)] (Tp) {} ;
    \node[draw,rectangle,rounded corners,thick,fit=(abcpp) (tabcpp) (f1) (f2)] (Tpp) {} ;
    \node[draw,rectangle,rounded corners,thick,fit=(de1)  (deijk) (tdeijk) (d1p) (d2p)] (P) {} ;
    \node[draw,rectangle,rounded corners,thick,fit=(tind) (tindp)] (I) {} ;
    \node[draw,rectangle,rounded corners,thick,fit=(g1) (g2)] (Ip) {} ;

    \draw[thick] (I) -- (Ip) ;

    \foreach \i/\j/\k in {0.5/-0.1/T, -0.1/-0.1/T', -0.7/-0.1/T'',-1.88/0/I', 0.4/1.1/C, 0.4/2/P, -1.7/1.1/I}{
      \node (t\k) at (\i * \h, \j * \v) {$\k$} ;
    }
    \foreach \i/\j/\k in {0.48/0.135/{(3,0)}, -0.12/0.135/{(3,0)}, -0.72/0.135/{(3,0)}, -1.88/0.18/{(0,1)}, 0.48/1.235/{(2,3)}, 0.48/2.235/{(2,3)}, -1.72/1.245/{(0,5)}}{
      \node[red] at (\i * \h, \j * \v) {\tiny{$\k$}} ;
    }
  \end{tikzpicture}
  \caption{The unweighted construction $G'$.
  In bold, the conditions for the existence of a vertex or of an edge.
  The pairs in red recall, for vertices of the corresponding set, the length of their vector tuple followed by the length of their index tuple.}
  \label{fig:unweighted}
\end{figure}

\paragraph*{New vertex and edge count}

We added to $V(G)$ $O(N^3)$ vertices in $T' \cup T''$, and $\ell$, in $I'$.
Thus $G'$ has also $O(|V(G)|) = O(N^3+N^2 \ell^3 + \ell^5)=O(N^3)$ vertices. 
We added to $E(G)$ $O(N^3+N^3 \ell^3)$ edges incident to~$T'$, and $O(N^3 \ell+\ell^6+\ell^2)$ edges incident to $I'$.
(The edges between $T''$ and $I$ were already counted in $G$ between $T$ and $I$.)
Thus $G'$ has $O(|E(G)|) = O(N^3 \ell^5+N^2 \ell^6 + \ell^{10})=\tilde{O}(N^3)$ edges.
Again $G'$ can be computed in time $\tilde{O}(N^3)$.

\subsection{The absence of orthogonal quadruple implies diameter at most 4}

In case $S$ has no orthogonal quadruple, we use similar arguments as in $G$, to find paths of length at most~4 between every pair of vertices in $G'$.
We first show that $I'$ is at distance at most~3 of every vertex of $G'$.
\begin{lemma}\label{lem:distIp}
  $N[I'] \supseteq I' \cup I \cup T''$, $N^2[I'] \supseteq I' \cup I \cup T'' \cup P \cup T'$, and $N^3[I']=V(G')$.
\end{lemma}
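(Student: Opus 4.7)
The plan is to mimic the proof of~\cref{lem:distI}, pushing out from~$I'$ one layer at a time. As in the weighted case, we first preprocess the instance: checking for orthogonal pairs takes $\tilde{O}(N^2)$ time and checking for orthogonal triples takes $\tilde{O}(N^3)$ time, so we may assume that $S$ contains neither an orthogonal pair nor an orthogonal triple. In particular, for every $a,b \in S$ there exists $i \in [\ell]$ with $a[i]=b[i]=1$, and for every $a,b,c \in S$ there exists $i \in [\ell]$ with $a[i]=b[i]=c[i]=1$.

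For the first inclusion, the clique structure on $G'[I' \cup I]$ immediately yields $I' \cup I \subseteq N[I']$. For any $(a,b,c)_{T''} \in T''$, the absence of an orthogonal pair $\{a,b\}$ gives some $i \in [\ell]$ with $a[i]=b[i]=1$, and by construction $(i) \in I'$ is adjacent to $(a,b,c)_{T''}$; hence $T'' \subseteq N[I']$. This establishes $N[I'] \supseteq I' \cup I \cup T''$.

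For the second inclusion, I would extend the previous one by a single step in two ways. By~\cref{lem:distI}, $P \subseteq N[I]$, and since $I \subseteq N[I']$, this gives $P \subseteq N^2[I']$. For $T'$, each $(a,b,c)_{T'}$ is adjacent to its homologue $(a,b,c)_{T''} \in N[I']$ via the perfect matching edge added between $T'$ and $T''$, so $T' \subseteq N^2[I']$. Combined with the first inclusion, $N^2[I'] \supseteq I' \cup I \cup T'' \cup P \cup T'$.

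For the third (and main) equality, I would extend once more. The perfect matching between $T$ and $T'$ gives $T \subseteq N[T'] \subseteq N^3[I']$. For $C$, as in the proof of~\cref{lem:distI} every $(a,b,i,j,k) \in C$ is adjacent to $(\{a,a\},i,j,k) \in P$ (this edge is unchanged in $G'$), so $C \subseteq N[P] \subseteq N^3[I']$. Since $V(G') = T \cup T' \cup T'' \cup C \cup P \cup I \cup I'$, all seven sets are now accounted for, yielding $N^3[I'] = V(G')$. The only subtlety is invoking the preprocessing assumption correctly at the very first step; everything else is a direct one-step expansion analogous to~\cref{lem:distI}.
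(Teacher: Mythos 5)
Your proof is correct and takes essentially the same route as the paper's: the same layer-by-layer expansion from $I'$, with the same witnesses at each step (the clique on $I \cup I'$, the $I'$--$T''$ edges given by a common index of $a$ and $b$, the matching edges $T''$--$T'$ and $T'$--$T$, the vertex $(i,i,i,j,k) \in I$ for $P$, and $(\{a,a\},i,j,k) \in P$ for $C$). The only cosmetic difference is that you justify the absence of orthogonal pairs and triples by an explicit preprocessing step, while the paper folds this into the standing no-orthogonal-quadruple assumption ("for otherwise $a,b$ is an orthogonal pair"); both justifications are fine.
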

\begin{proof}
  The inclusions are actually equalities.
  $N[I'] \supseteq I' \cup I \cup T''$ since $I$ is fully adjacent to~$I'$ and every $(a,b,c)_{T''} \in T''$ is adjacent to some $(i) \in I'$, for otherwise $a,b$ is an orthogonal pair.
  $N^2[I'] \supseteq N[I' \cup I \cup T''] \supseteq I' \cup I \cup T'' \cup P \cup T'$ since every $(\{a,b\},i,j,k) \in P$ is adjacent to $(i,i,i,j,k) \in I$ and every $(a,b,c)_{T'} \in T'$ is adjacent to $(a,b,c)_{T''} \in T''$.
  Finally, $N^3[I'] \supseteq N[I' \cup I \cup T'' \cup P \cup T'] = V(G')$ since every $(a,b,i,j,k) \in C$ is adjacent to $(\{a,a\},i,j,k) \in P$ and every $(a,b,c)_T \in T$ is adjacent to $(a,b,c)_{T'} \in T'$.
\end{proof}

We also show the following inclusions.
\begin{lemma}\label{lem:distI2}
  $N[I] \supset P \cup T''$, and $N^2[I] \supset P \cup C \cup T'' \cup T'$.
\end{lemma}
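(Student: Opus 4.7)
The plan is to establish each inclusion by exhibiting appropriate neighbors, in the same style as \cref{lem:distI,lem:distIp}. Throughout, I use that no orthogonal quadruple implies no orthogonal triple: if $a,b,c \in S$ had no common index at which all three vectors are $1$, then extending by any $d \in S$ would yield an orthogonal quadruple, contradicting our hypothesis. (As noted earlier, orthogonal triples can anyway be pre-eliminated in time $\tilde{O}(N^3)$.)

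For $N[I] \supseteq P$, I reuse the argument from the proof of \cref{lem:distI}: every $(\{a,b\},i,j,k) \in P$ is adjacent to $(i,i,i,j,k) \in I$ via an edge of $E(I,P)$, since $a[i] = b[i] = 1$. For $N[I] \supseteq T''$, I use the newly added edges in $E(T'',I)$: given $(a,b,c)_{T''} \in T''$, pick an index $p \in [\ell]$ with $a[p] = b[p] = c[p] = 1$ (such $p$ exists by the no-orthogonal-triple observation), and then $(p,p,i,j,k) \in I$ (for any choice of $i,j,k$) is adjacent to $(a,b,c)_{T''}$.

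For the second inclusion, write $N^2[I] \supseteq N[N[I]] \supseteq N[P \cup T'']$ and check the two remaining memberships. To get $C \subseteq N[P]$, observe that every $(a,b,i,j,k) \in C$ satisfies $a[i]=a[j]=a[k]=1$, so the vertex $(\{a,a\},i,j,k) \in P$ exists and is adjacent to it via $E(C,P)$ (since $a \in \{a,a\}$). To get $T' \subseteq N[T'']$, just use the edges $(a,b,c)_{T'} - (a,b,c)_{T''}$ added by construction. Combining, $N^2[I] \supseteq P \cup T'' \cup C \cup T'$, as required.

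No step should present a real obstacle; the only subtlety is the one flagged at the outset, namely transferring the \enquote{no orthogonal triple} assumption from the previous section to justify that every triple $(a,b,c) \in T$ has some common $1$-index, which is what unlocks the edge from $T''$ into $I$.
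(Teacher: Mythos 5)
Your proof is correct and follows essentially the same route as the paper: the memberships for $P$, $C$ (via $(\{a,a\},i,j,k)$), and $T'$ (via the matching to $T''$) are exactly the facts the paper recycles from Lemma~\ref{lem:distIp}, and the only genuinely new point, $T'' \subseteq N[I]$, is handled identically by picking a common 1-index $p$ of $a,b,c$ (guaranteed since an orthogonal triple would yield an orthogonal quadruple) and using the $E(T'',I)$ edge to $(p,p,\cdot,\cdot,\cdot)$, where the paper takes $(p,p,p,p,p)$. No gaps.
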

\begin{proof}
  $N[I] \supset P$, $N^2[I] \supset C$, and $N[T''] \supset T'$ have all been shown in~\cref{lem:distIp}.
  Therefore we shall just prove that $N[I] \supset T''$.
  Indeed every vertex $(a,b,c)_{T''} \in T''$ is adjacent to some $(i,i,i,i,i) \in I$, since otherwise $a,b,c$ is an orthogonal triple.
\end{proof}

For the case disjunction, initially imagine the $K_7$ with loops on vertices $T,T',T'',C,P,I,I'$, where edges correspond to the kinds of pairs that are left to check.
The following paragraphs remove all its edges in the order: all edges incident to $I$ and to $I'$, all remaining edges incident to $P$ and to $T''$ but $TP$ and $TT''$, all remaining edges incident to $C$, all remaining edges incident to $T'$ as well the loop on $T$, the edge $TP$, and finally the edge $TT''$.

\paragraph*{Between $u \in I \cup I'$ and $v \in V(G')$}

As $G'[I \cup I']$ is a clique and, by \cref{lem:distIp}, $N^3[I'] = V(G')$, then $N^4[u] = V(G')$ holds for every vertex $u \in I \cup I'$.

\paragraph*{Between $u \in P \cup T''$ and $v \in P \cup C \cup T'' \cup T'$}

For every $u \in P \cup T''$, by~\cref{lem:distI2} and the fact that $G'[I]$ is a clique, $N^2[u] \supset I$ and, again by~\cref{lem:distI2}, $N^4[u] \supset P \cup C \cup T'' \cup T'$.
In particular there is a path of length at most 4 from $u$ to any vertex $v \in P \cup C \cup T'' \cup T'$.

The following two cases work as in $G$, since $(a,b,c)_T$ and $(a,b,c)_{T'}$ are twins in $G'[T \cup T' \cup C \cup P]$.

\paragraph*{Between $u \in C$ and $v \in T \cup T' \cup C$}

This holds by replacing the occurrence of $(c,d,e)$ by $(c,d,e)_T$ or $(c,d,e)_{T'}$, and every occurrence of $T$ by $T \cup T'$, in the paragraph \emph{Between $u \in C$ and $v \in T \cup C$} of the weighted construction.

\paragraph*{Between $u \in T \cup T'$ and $v \in T \cup T'$}

Again this holds by replacing occurrences of $(a,b,c)$ (resp.~$(c,d,e)$) by $(a,b,c)_T$ or $(a,b,c)_{T'}$ (resp.~$(c,d,e)_T$ or $(c,d,e)_{T'}$).

\paragraph*{Between $u \in T$ and $v \in P$}

This works as in $G$ by following three edges of weight~1 from $T$ to $I$, instead of a single edge of weight~3.
For every $u = (a,b,c)_T \in T$ and $v = (\{d,e\},i,j,k) \in P$, there is a path $u = (a,b,c)_T - (a,b,c)_{T'} - (a,b,c)_{T''} - (p_1,p_2,i,j,k) - (\{d,e\},i,j,k) = v$ in $G'$, with $p_1 = \ind(a,b,c,d)$ and $p_2 = \ind(a,b,c,e)$.

\paragraph*{Between $u \in T$ and $v \in T''$}

This case is the real novelty compared to $G$, and the reason for introducing $I'$. 
For every $u = (a,b,c)_T \in T$ and $v = (d,e,f)_{T''} \in T''$, there is a path $u = (a,b,c)_T - (a,b,c)_{T'} - (a,b,c)_{T''} - (i) - (d,e,f)_{T''} = v$ in $G'$, with $i = \ind(a,b,d,e)$.
The last two edges exist since $a[i] = b[i] = d[i] = e[i] = 1$.

\subsection{The presence of orthogonal quadruple implies diameter at least 7}

Again we assume that there is an orthogonal quadruple $a,b,c,d \in S$ such that $a,b,c,d$ are pairwise distinct.
We claim that there is no path of length at most~6 in $G'$ between $u = (a,b,c)_T$ and $v = (d,c,b)_T$.
Since the distance between $T$ and $I \cup I'$ is at least~3, any $TT$-path of length at most~6 visits $I \cup I'$ at most once.
For the sake of contradiction, let $\mathcal P$ be such a path that we further assume shortest (hence in particular chordless) and, among shortest $uv$-paths, having the fewest edges in $E(T',C)$.
We will show that $\mathcal P$ cannot visit $I'$, nor use any edge of $E(T',C)$.
Finally we observe that $TT$-paths of length at most~6 in $G'$ respecting these two interdictions are in length-preserving one-to-one correspondence with $TT$-paths in $G$. 

\paragraph*{$\mathcal P$ cannot visit $I'$}

The only possible kind of a $TT$-path of length at most~6 visiting $I'$ is $T - T' - T'' - I' - T'' - T' - T$.
This forces $\mathcal P$ to be of the form $(a,b,c)_T - (a,b,c)_{T'} - (a,b,c)_{T''} - (i) - (d,c,b)_{T''} - (d,c,b)_{T'} - (d,c,b)_T$ for some $i \in [\ell]$.
However the third and fourth edges imply that there is an $i \in [\ell]$ such that $a[i] = b[i] = c[i] = d[i] = 1$, a contradiction to the orthogonality of $a,b,c,d$.

\paragraph*{$\mathcal P$ cannot use any edge of $E(T',C)$}

Assuming that $\mathcal P$ contains at least one edge in $E(T',C)$, we first show that it has to contain a subpath $C - T' - T'' - I \cup I'$ or $I \cup I' - T'' - T' - C$.
Let $w = (a',b',c')_{T'} \in T' \cap V(\mathcal P)$ be a vertex of $\mathcal P$ with one neighbor $x \in C \cap V(\mathcal P)$ on $\mathcal P$.
The other neighbor $y$ of $w$ on $\mathcal P$ is necessarily in $T''$.
Indeed if $y \in T$, then $y = (a',b',c')_T$, and $xy \in E(G')$ is a chord.
If instead $y \in C$, then one can replace the subpath $x - (a',b',c')_{T'} - y$ by $x - (a',b',c')_T - y$, contradicting the minimality of the number of used edges in $E(T',C)$ (since this number decreases by~2).

Thus the only possibility is that $y \in T''$.
Then the other neighbor of $y$ on $\mathcal P$ (other than~$w$) has to be in $I \cup I'$, since otherwise $\mathcal P$ is not a \emph{simple} path.
Hence $\mathcal P$ contains a subpath of the kind $C - T' - T'' - I \cup I'$ (or the reverse, $I \cup I' - T'' - T' - C$).
Now we observe that $C$ is at distance at least 1 from $T$, while $I \cup I'$ is at distance at least~3 from $T$.
Therefore such a path $\mathcal P$ would have length at least~7.

\paragraph*{Such a path $\mathcal P$ would also exist in $G$}

We can now assume that $\mathcal P$ does not use any vertex of $I'$ nor any edge of $E(T',C)$.
Every such \emph{simple} $TT$-path (visiting $I$ at most once) also \emph{exists} in the weighted graph $G$, with the same length.
To see it, we notice that if $\mathcal P$ contains an edge $(a',b',c')_T - (a',b',c')_{T'}$, then it has to contain a subpath of the form $(a',b',c')_T - (a',b',c')_{T'} - (a',b',c')_{T''} - (p_1,p_2,i,j,k) \in I$, and is emulated in $G$ by taking the weight-3 edge $(a',b',c') - (p_1,p_2,i,j,k)$.
However we showed in the previous section that no $uv$-path of length at most~6 exists in $G$.


\begin{thebibliography}{10}

\bibitem{Aingworth99}
Donald Aingworth, Chandra Chekuri, Piotr Indyk, and Rajeev Motwani.
\newblock {Fast Estimation of Diameter and Shortest Paths (Without Matrix
  Multiplication)}.
\newblock {\em {SIAM} J. Comput.}, 28(4):1167--1181, 1999.
\newblock URL: \url{https://doi.org/10.1137/S0097539796303421}, \href
  {http://dx.doi.org/10.1137/S0097539796303421}
  {\path{doi:10.1137/S0097539796303421}}.

\bibitem{Backurs18}
Arturs Backurs, Liam Roditty, Gilad Segal, Virginia~Vassilevska Williams, and
  Nicole Wein.
\newblock Towards tight approximation bounds for graph diameter and
  eccentricities.
\newblock In Ilias Diakonikolas, David Kempe, and Monika Henzinger, editors,
  {\em Proceedings of the 50th Annual {ACM} {SIGACT} Symposium on Theory of
  Computing, {STOC} 2018, Los Angeles, CA, USA, June 25-29, 2018}, pages
  267--280. {ACM}, 2018.
\newblock URL: \url{https://doi.org/10.1145/3188745.3188950}, \href
  {http://dx.doi.org/10.1145/3188745.3188950}
  {\path{doi:10.1145/3188745.3188950}}.

\bibitem{Bonnet20}
{\'{E}}douard Bonnet.
\newblock {Inapproximability of Diameter in super-linear time: Beyond the 5/3
  ratio}.
\newblock {\em CoRR, To appear at STACS 2021}, abs/2008.11315, 2020.
\newblock URL: \url{https://arxiv.org/abs/2008.11315}, \href
  {http://arxiv.org/abs/2008.11315} {\path{arXiv:2008.11315}}.

\bibitem{Cairo16}
Massimo Cairo, Roberto Grossi, and Romeo Rizzi.
\newblock {New Bounds for Approximating Extremal Distances in Undirected
  Graphs}.
\newblock In Robert Krauthgamer, editor, {\em Proceedings of the Twenty-Seventh
  Annual {ACM-SIAM} Symposium on Discrete Algorithms, {SODA} 2016, Arlington,
  VA, USA, January 10-12, 2016}, pages 363--376. {SIAM}, 2016.
\newblock URL: \url{https://doi.org/10.1137/1.9781611974331.ch27}, \href
  {http://dx.doi.org/10.1137/1.9781611974331.ch27}
  {\path{doi:10.1137/1.9781611974331.ch27}}.

\bibitem{Carmosino16}
Marco~L. Carmosino, Jiawei Gao, Russell Impagliazzo, Ivan Mihajlin, Ramamohan
  Paturi, and Stefan Schneider.
\newblock Nondeterministic extensions of the strong exponential time hypothesis
  and consequences for non-reducibility.
\newblock In Madhu Sudan, editor, {\em Proceedings of the 2016 {ACM} Conference
  on Innovations in Theoretical Computer Science, Cambridge, MA, USA, January
  14-16, 2016}, pages 261--270. {ACM}, 2016.
\newblock URL: \url{https://doi.org/10.1145/2840728.2840746}, \href
  {http://dx.doi.org/10.1145/2840728.2840746}
  {\path{doi:10.1145/2840728.2840746}}.

\bibitem{Chechik14}
Shiri Chechik, Daniel~H. Larkin, Liam Roditty, Grant Schoenebeck, Robert~Endre
  Tarjan, and Virginia~Vassilevska Williams.
\newblock {Better Approximation Algorithms for the Graph Diameter}.
\newblock In Chandra Chekuri, editor, {\em Proceedings of the Twenty-Fifth
  Annual {ACM-SIAM} Symposium on Discrete Algorithms, {SODA} 2014, Portland,
  Oregon, USA, January 5-7, 2014}, pages 1041--1052. {SIAM}, 2014.
\newblock URL: \url{https://doi.org/10.1137/1.9781611973402.78}, \href
  {http://dx.doi.org/10.1137/1.9781611973402.78}
  {\path{doi:10.1137/1.9781611973402.78}}.

\bibitem{Cygan16}
Marek Cygan, Holger Dell, Daniel Lokshtanov, D{\'{a}}niel Marx, Jesper
  Nederlof, Yoshio Okamoto, Ramamohan Paturi, Saket Saurabh, and Magnus
  Wahlstr{\"{o}}m.
\newblock {On Problems as Hard as CNF-SAT}.
\newblock {\em {ACM} Trans. Algorithms}, 12(3):41:1--41:24, 2016.
\newblock URL: \url{https://doi.org/10.1145/2925416}, \href
  {http://dx.doi.org/10.1145/2925416} {\path{doi:10.1145/2925416}}.

\bibitem{Cygan15}
Marek Cygan, Fedor~V. Fomin, Lukasz Kowalik, Daniel Lokshtanov, D{\'{a}}niel
  Marx, Marcin Pilipczuk, Michal Pilipczuk, and Saket Saurabh.
\newblock {\em {Parameterized Algorithms}}.
\newblock Springer, 2015.
\newblock URL: \url{https://doi.org/10.1007/978-3-319-21275-3}, \href
  {http://dx.doi.org/10.1007/978-3-319-21275-3}
  {\path{doi:10.1007/978-3-319-21275-3}}.

\bibitem{Wein20}
Mina Dalirrooyfard and Nicole Wein.
\newblock {Tight Conditional Lower Bounds for Approximating Diameter in
  Directed Graphs}.
\newblock {\em CoRR}, abs/2011.03892, 2020.
\newblock URL: \url{https://arxiv.org/abs/2011.03892}, \href
  {http://arxiv.org/abs/2011.03892} {\path{arXiv:2011.03892}}.

\bibitem{Sparsification}
Russell Impagliazzo and Ramamohan Paturi.
\newblock On the complexity of k-sat.
\newblock {\em J. Comput. Syst. Sci.}, 62(2):367--375, 2001.
\newblock URL: \url{https://doi.org/10.1006/jcss.2000.1727}, \href
  {http://dx.doi.org/10.1006/jcss.2000.1727}
  {\path{doi:10.1006/jcss.2000.1727}}.

\bibitem{Impagliazzo01}
Russell Impagliazzo, Ramamohan Paturi, and Francis Zane.
\newblock {Which Problems Have Strongly Exponential Complexity?}
\newblock {\em J. Comput. Syst. Sci.}, 63(4):512--530, 2001.
\newblock URL: \url{https://doi.org/10.1006/jcss.2001.1774}, \href
  {http://dx.doi.org/10.1006/jcss.2001.1774}
  {\path{doi:10.1006/jcss.2001.1774}}.

\bibitem{Li20}
Ray Li.
\newblock {Improved SETH-hardness of unweighted Diameter}.
\newblock {\em CoRR}, abs/2008.05106, 2020.
\newblock URL: \url{https://arxiv.org/abs/2008.05106}, \href
  {http://arxiv.org/abs/2008.05106} {\path{arXiv:2008.05106}}.

\bibitem{Li20b}
Ray Li.
\newblock {Settling SETH vs. Approximate Sparse Directed Unweighted Diameter
  (up to (NU)NSETH)}.
\newblock {\em CoRR}, abs/2008.05106, 2020.
\newblock URL: \url{https://arxiv.org/abs/2008.05106}, \href
  {http://arxiv.org/abs/2008.05106} {\path{arXiv:2008.05106}}.

\bibitem{Lokshtanov11}
Daniel Lokshtanov, D{\'{a}}niel Marx, and Saket Saurabh.
\newblock {Lower bounds based on the Exponential Time Hypothesis}.
\newblock {\em Bull. {EATCS}}, 105:41--72, 2011.
\newblock URL: \url{http://eatcs.org/beatcs/index.php/beatcs/article/view/92}.

\bibitem{Roditty13}
Liam Roditty and Virginia~Vassilevska Williams.
\newblock Fast approximation algorithms for the diameter and radius of sparse
  graphs.
\newblock In Dan Boneh, Tim Roughgarden, and Joan Feigenbaum, editors, {\em
  Symposium on Theory of Computing Conference, STOC'13, Palo Alto, CA, USA,
  June 1-4, 2013}, pages 515--524. {ACM}, 2013.
\newblock URL: \url{https://doi.org/10.1145/2488608.2488673}, \href
  {http://dx.doi.org/10.1145/2488608.2488673}
  {\path{doi:10.1145/2488608.2488673}}.

\bibitem{Rubinstein19}
Aviad Rubinstein and Virginia~Vassilevska Williams.
\newblock {SETH vs Approximation}.
\newblock {\em {SIGACT} News}, 50(4):57--76, 2019.
\newblock URL: \url{https://doi.org/10.1145/3374857.3374870}, \href
  {http://dx.doi.org/10.1145/3374857.3374870}
  {\path{doi:10.1145/3374857.3374870}}.

\bibitem{Williams05}
Ryan Williams.
\newblock A new algorithm for optimal 2-constraint satisfaction and its
  implications.
\newblock {\em Theor. Comput. Sci.}, 348(2-3):357--365, 2005.
\newblock URL: \url{https://doi.org/10.1016/j.tcs.2005.09.023}, \href
  {http://dx.doi.org/10.1016/j.tcs.2005.09.023}
  {\path{doi:10.1016/j.tcs.2005.09.023}}.

\bibitem{WilliamsSurvey}
Virginia~Vassilevska Williams.
\newblock On some fine-grained questions in algorithms and complexity.
\newblock In {\em Proceedings of the ICM}, volume~3, pages 3431--3472. World
  Scientific, 2018.

\end{thebibliography}

\end{document}